\documentclass[12pt]{amsart}
\usepackage[utf8]{inputenc}
\usepackage[english]{babel}
\usepackage{amsmath,amssymb,amsfonts,amsthm}
\usepackage{amscd} 
\usepackage{mathtools} 
\usepackage{longtable}
\usepackage[noadjust]{cite} 
\usepackage{enumitem}
\usepackage{float}
\usepackage[mathcal]{euscript} 
\usepackage[unicode]{hyperref}
\usepackage{ragged2e}
\usepackage{xfrac} 

\usepackage{graphicx, array}
\graphicspath{{figures/}}
\usepackage[labelfont=small,labelformat=simple]{subcaption}

\usepackage{xcolor}

\theoremstyle{plain}
\newtheorem{theorem}{Theorem}[section]
\newtheorem{lemma}[theorem]{Lemma}

\theoremstyle{definition}
\newtheorem{definition}[theorem]{Definition}

\newtheorem{remark}[theorem]{Remark}

\begin{document}

\renewcommand{\datename}{}

\renewcommand{\abstractname}{Abstract}
\renewcommand{\refname}{References}
\renewcommand{\tablename}{Table}
\renewcommand{\figurename}{Figure}
\renewcommand{\proofname}{Proof}

\title[Monodromy and Morse theory]{Hamiltonian monodromy and Morse theory}

\author{N. Martynchuk$^{1,2}$}

\author{H.W. Broer$^1$}

\author{K. Efstathiou$^1$}

\email{n.martynchuk@rug.nl}

\email{h.w.broer@rug.nl}

\email{k.efstathiou@rug.nl}

\thanks{\hspace{-5mm}$^1$ Bernoulli Institute, 
  University of Groningen, P.O. Box 407, 9700 AK Groningen,  The Netherlands}
\thanks{\hspace{-5mm}$^2$ \textit{Present address}: Department Mathematik, FAU Erlangen-N{\"u}rnberg, 
Cauerstr. 11, D-91058 Erlangen}

  \maketitle
  
  \begin{abstract}
  We show that Hamiltonian monodromy of an integrable two degrees of freedom system with a global circle action can be computed by applying Morse 
  theory to the Hamiltonian of the system. Our proof is based on Takens's index theorem, which specifies how the energy-$h$ Chern number changes when $h$ passes a non-degenerate critical value, and a 
  choice 
  of admissible cycles in
  Fomenko-Zieschang theory. Connections of our result
to some of the existing approaches to monodromy are discussed.

\mbox{ }\\
{\footnotesize The final publication is available at Springer via \\
\href{https://doi.org/10.1007/s00220-019-03578-2}{https://doi.org/10.1007/s00220-019-03578-2}
}

\end{abstract}

\keywords{Action-angle coordinates; Chern number; Hamiltonian system; Liouville integrability; Monodromy; Morse theory.}

\section{Introduction}

Questions related to the geometry and dynamics of finite-dimensional integrable Hamiltonian systems \cite{Arnold1978, Bolsinov2004, Cushman2015} permeate modern mathematics, physics, and chemistry. They are important to such disparate fields as celestial and galactic dynamics \cite{Binney1987}, persistence and stability of invariant tori (Kolmogorov-Arnold-Moser and Nekhoroshev theories) \cite{Kolmogorov1954, Arnold1963, Moser1967, Broer1996, Salamon2004}, quantum spectra of atoms and molecules \cite{Cushman1988, Vu-Ngoc1999, Child1998, Sadovskii1999}, and the SYZ conjecture in mirror symmetry \cite{Strominger1996}. 

At the most fundamental level, a local understanding of such systems is provided by the Arnol'd-Liouville theorem \cite{Liouville1855, Arnold1968, Arnold1978, Mineur1936}.
This theorem states that integrable systems are generically foliated by tori, given by the compact and regular joint level sets of the integrals of motion, and that such foliations are always locally trivial (in the symplectic sense).
A closely related consequence of the Arnol'd-Liouville theorem, is the local existence of the \textit{action coordinates} given by the formula
$$
I_i = \int_{\alpha_i} p \, dq,
$$
where $\alpha_i, \, i = 1, \ldots, n,$ are independent homology cycles on a given torus $T^n$ of the foliation.

Passing from the local to the global description of integrable Hamiltonian systems, naturally leads to questions on the geometry of the foliation of the phase space by Arnol'd-Liouville tori. For instance, the question of whether the  bundles formed by Arnol'd-Liouville tori come from a Hamiltonian torus action, is closely connected to the existence of \textit{global} action coordinates and \emph{Hamiltonian monodromy} \cite{Duistermaat1980}. In the present work, we shall review old and discuss new  ideas related to this classical invariant.

Monodromy was introduced by Duistermaat in \cite{Duistermaat1980} and it concerns a certain `holonomy' effect that appears when one tries to construct global action coordinates for a given integrable Hamiltonian system.
If the homology cycles $\alpha_i$ appearing in the definition of the actions $I_i$ cannot be globally defined along a certain closed path in phase space, then the monodromy is non-trivial; in particular,  the system has no global action coordinates and does not admit a Hamiltonian torus action of maximal dimension (the system is not toric).

Non-trivial Hamiltonian monodromy was found in various integrable systems. The list of examples contains among others
the (quadratic) spherical pendulum
\cite{Duistermaat1980, Bates1993, Efstathiou2005, Cushman2015}, the Lagrange top \cite{Cushman1985}, 
the Hamiltonian Hopf bifurcation \cite{Duistermaat1998}, 
the champagne bottle
\cite{Bates1991}, the Jaynes-Cummings model  \cite{JaynesCummings1963, Pelayo2012, Dullin2016}, the Euler two-center and the Kepler problems \cite{Waalkens2004, DullinWaalkens2018, Martynchuk2019}. The concept of monodromy has also been extended to near-integrable systems \cite{Rink2004, Broer2007a, Broer2007}. 

In the context of monodromy and its generalizations, it is natural to ask how one can compute this invariant for a given class of integrable Hamiltonian systems. Since Duistermaat's work \cite{Duistermaat1980}, a number of different approaches to this problem, ranging from the residue calculus to algebraic and symplectic geometry, have been developed.
 The very first topological argument 
that allows one to detect non-trivial monodromy in the spherical pendulum has been given by Richard Cushman. 
Specifically, he observed that, for this system, the energy
hyper-surfaces $H^{-1}(h)$ for large values of the energy $h$ are not diffeomorphic to the energy hyper-surfaces  near the minimum where the  pendulum is at rest. 
This property is incompatible  
with the triviality of monodromy; see \cite{Duistermaat1980} and Section~\ref{sec/Morse} for more details.
This  argument demonstrates that the monodromy in the spherical pendulum is non-trivial, but does not compute it. 

Cushman's argument 
had been sleeping for many years until Floris Takens \cite{Takens2010} proposed the idea of using Chern numbers of energy hyper-surfaces 
and Morse theory for the computation of monodromy.
More specifically, he observed that  in integrable systems with a Hamiltonian circle action (such as the spherical pendulum), the Chern number of energy hyper-surfaces changes when the energy passes a critical value of the Hamiltonian function. 
The main purpose of the present paper is to explain Takens's theorem and to show that it allows one to compute monodromy in  integrable systems with a circle action.

We note that the present work is closely related to the works \cite{EfstathiouMartynchuk2017, Martynchuk2017}, which demonstrate how one can compute monodromy by focusing on the circle action and without using Morse theory. 
However,
the idea of computing monodromy through energy hyper-surfaces  and their Chern numbers can also be applied when we do not have a detailed knowledge of the singularities of the system; see Remark~\ref{remark/generalizations}. In particular, it can be applied to the case when we do not have any information about the fixed points of the circle action. We note that the behaviour of the circle action near the fixed points is important for the theory developed in the works \cite{EfstathiouMartynchuk2017, Martynchuk2017}.

The paper is organized as follows.
In Section~\ref{sec/Takens}  we discuss Takens's idea following \cite{Takens2010}.
In particular, we state and prove Takens's index theorem, which is central to the present work. In Section~\ref{sec/Morse} we show how this theorem can be applied to the context of monodromy. We discuss in detail two examples and make a connection to the Duistermaat-Heckman theorem \cite{Duistermaat1982}.
In Section~\ref{sec/connections} we revisit the symmetry approach to monodromy presented in the works
\cite{EfstathiouMartynchuk2017, Martynchuk2017}, and link it to the rotation number \cite{Cushman2015}.
The paper is concluded with a discussion in Section~\ref{sec/discussion}. 
Background material on Hamiltonian monodromy and Chern classes is presented in the Appendix.

\section{Takens's index theorem} \label{sec/Takens}
We consider an oriented $4$-manifold $M$ and a smooth Morse function $H$ on this manifold.
We recall that $H$ is called a Morse function if for
any critical (= singular) point $x$ of $H,$ the Hessian 
\begin{equation*} 
\dfrac{\partial^2 H}{\partial x_i \partial x_j}(x)
\end{equation*}
is non-degenerate.  We shall assume that $H$ is a proper \footnote{preimages of compact sets are compact} function and that 
it is invariant under a 
smooth circle action $G \colon M \times \mathbb S^1 \to M$ that is free outside the critical points of $H$. Note that the critical points of $H$ are the fixed points of the circle action.

\begin{remark} \textit{(Context of integrable Hamiltonian systems)} \label{remark/context_of_integrable} In the context of integrable systems, the function $H$ is given by the Hamiltonian of the system or another first integral, while
the circle action comes from the (rotational) symmetry. 
For instance, in the  spherical pendulum \cite{Duistermaat1980, Cushman2015}, which is a typical example of a system with monodromy,  one can take the function $H$ to be the Hamiltonian of the system; the circle action is given by the component of the angular momentum along the gravitational axis. We shall discuss this example in detail later on.
In the Jaynes-Cummings model \cite{JaynesCummings1963, Pelayo2012, Dullin2016},  one can take the function $H$ to be the integral that generates the circle action, but one can not take $H$ to be 
the Hamiltonian of the system since the latter function is not proper. 
\end{remark}

For any regular level
$H_h = \{x\in M \mid H(x) = h\},$ the circle action gives rise to the circle bundle
$$\rho_h \colon H_h \to B_h = H_h / \mathbb S^1.$$
By definition, the fibers $\rho_h^{-1}(b)$ of this bundle $\rho_h$ are the orbits of the circle action. The question that was addressed by Takens is how 
the Chern number (also known as the Euler number since it generalizes the Euler characteristic)
of this bundle changes as $h$ passes a critical value of $H$. Before stating his result  we shall make a few remarks on the Chern number 
and the circle action.

First, we note that the manifolds $H_h$ and $B_h$ are compact and admit an induced orientation. Assume, for simplicity, that $B_h$ (and hence $H_h$) are connected. Since the base manifold $B_h$ is $2$-dimensional, the (principal) circle bundle
$\rho_h \colon H_h \to B_h$ has an `almost global' section 
\begin{equation*}
s \colon  B_h   \to \rho_h^{-1}(B_h)
\end{equation*}
that is not defined at most in one point $b \in b_h.$ Let $\alpha$ be a (small) loop that encircles this point. 
\begin{definition} \label{def/Chern_number}
The \textit{Chern number} $c(h)$  of the principal bundle 
$$\rho_h \colon H_h \to B_h$$ can be defined as the winding number of $s(\alpha)$ 
along the orbit $\rho_h^{-1}(b)$. 
In other words,
  $c(h)$ is
the degree of the map
\begin{equation*} 
 S^1 = \alpha \to s(\alpha) \to \rho_h^{-1}(b) = \mathbb S^1,
\end{equation*}
where the map $s(\alpha) \to \rho_h^{-1}(b)$  is induced by a retraction of a tubular neighbourhood of $\rho_h^{-1}(b)$ onto $\rho_h^{-1}(b)$.
\end{definition}
\begin{remark}We note that the Chern number $c(h)$ is a topological invariant of the bundle $\rho_h \colon H_h \to B_h$ which
does not depend on the specific choice of the section $s$ and the loop $\alpha$; for details see \cite{Milnor1974, Postnikov1982, Fomenko2010}. 
\end{remark}

Now, consider a singular point $P$ of $H$. Observe that this point is fixed under the circle action. 
From the slice theorem \cite[Theorem I.2.1]{Audin2004} (see also \cite{Bochner1945}) it follows that in a small equivariant neighbourhood of this point the action 
can be linearized. Thus, in appropriate complex coordinates $(z,w) \in \mathbb C^2$ the action can be written
as
\begin{equation*}
  (z,w) \mapsto (e^{imt} z, e^{int} w), \ t \in \mathbb S^1,
\end{equation*}
for some integers $m$ and $n$.
By our assumption, the circle action is free outside the (isolated) critical points of the Morse function $H$.
Hence, near each such critical point the action can be written as
\begin{equation*}
  (z,w) \mapsto (e^{\pm it} z, e^{it} w), \ t \in \mathbb S^1,
\end{equation*}
in appropriate complex coordinates $(z,w) \in \mathbb C^2$.
The two cases can be mapped to each other through an orientation-reversing coordinate change. 

\begin{definition} \label{defsign}
  A singular point $P$ is called \emph{positive} if the local circle action is given by $(z,w) \mapsto (e^{-it} z, e^{it} w)$ and \emph{negative} 
  if the action is given by $(z,w) \mapsto (e^{it} z, e^{it} w)$ in a coordinate chart having the positive orientation with respect to the orientation of $M$.
\end{definition}

\begin{remark}
The Hopf fibration is defined by the circle action $(z,w) \mapsto (e^{it}z,e^{it}w)$ on the sphere 
  $$S^3 = \{(z,w) \in \mathbb C^2 \mid 1 = |z|^2+|w|^2\}.$$
  The circle action $(z,w) \mapsto (e^{-it}z,e^{it}w)$ defines the \emph{anti-Hopf fibration} on $S^3$ \cite{Urbantke2003}. If the orientation is fixed, these two fibrations are different.
\end{remark}

\begin{lemma} \label{hopfantihopf1}
 The Chern number of the Hopf fibration is equal to $-1$, while for the anti-Hopf fibration it is equal to $1$.
\end{lemma}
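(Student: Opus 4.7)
The plan is to evaluate Definition~\ref{def/Chern_number} explicitly for the Hopf fibration, and then deduce the anti-Hopf case by a symmetry argument. For the Hopf action $(z,w)\mapsto(e^{it}z,e^{it}w)$ on $S^3$, I would identify the base with $\mathbb{CP}^1\cong S^2$ via the affine chart $\zeta\mapsto[\zeta:1]$, which covers $B$ except the point $p=[1:0]$. On this chart take the explicit almost-global section
\[
s(\zeta)=\frac{(\zeta,1)}{\sqrt{1+|\zeta|^2}}.
\]
The fiber over the missing point $p$ is the orbit $\{(e^{it},0):t\in\mathbb{R}\}$. For a small loop $\alpha$ around $p$, parametrized in the complementary chart $\eta=1/\zeta$ by $\eta(\theta)=re^{i\theta}$ with $r$ small and $\theta\in[0,2\pi]$, the curve $s(\alpha)$ lies in a tubular neighbourhood of this fiber and retracts onto it, via $(z,w)\mapsto(z/|z|,0)$, as $\theta\mapsto(e^{-i\theta},0)$. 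Comparing with the parametrization $t\mapsto(e^{it},0)$ of the fiber coming from the $S^1$-action, the winding number is $-1$, which gives the Chern number of the Hopf fibration.

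For the anti-Hopf fibration, rather than repeating the calculation I would invoke the map $\phi(z,w)=(\bar z,w)$, an orientation-reversing self-diffeomorphism of $S^3$. A direct check shows that $\phi$ conjugates the Hopf action into the anti-Hopf action, so it descends to a bundle isomorphism of the two circle bundles over $S^2$; however, because $\phi$ reverses the orientation of the total space while preserving the fiber orientation (the $S^1$-action is intact), it reverses the induced orientation on the base. Since the Chern number changes sign under base orientation reversal, the anti-Hopf Chern number is $-(-1)=+1$.

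The main obstacle is the orientation bookkeeping. Definition~\ref{def/Chern_number} is sensitive to (i) the orientation of $M=\mathbb{C}^2$ inducing one on $S^3$, (ii) the induced orientation on $B=S^2$ compatible with the fiber orientation $t\mapsto e^{it}$, and (iii) the positive orientation of the loop $\alpha$. Pinning all three down consistently (for instance by computing the reduced symplectic form on $\mathbb{CP}^1$ and comparing with the standard orientation of the $\eta$-chart) is where sign errors could creep in, but once this is fixed both the explicit winding number computation and the symmetry argument are each essentially one line.
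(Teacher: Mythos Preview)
Your proposal is correct and essentially coincides with the paper's proof: both compute the Hopf Chern number by writing down the explicit section $s(\zeta)=(\zeta,1)/\sqrt{1+|\zeta|^2}$ on the affine chart $[\zeta:1]$ and reading off the winding number~$-1$ near the missing point (the paper phrases this via the transition cocycle $t_{12}((u:1))=\exp(-i\operatorname{Arg} u)$ between two such sections, which is the same calculation). The only difference is cosmetic: for the anti-Hopf case the paper simply repeats the computation, whereas your use of the orientation-reversing involution $\phi(z,w)=(\bar z,w)$ to flip the sign is a clean shortcut, and your analysis that $\phi$ intertwines the two actions, preserves fiber orientation, and hence reverses base orientation is correct.
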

\begin{proof}
See Appendix~\ref{appendix/chern}.
\end{proof}

\begin{theorem} \textup{(Takens's index theorem \cite{Takens2010})} \label{theorem/Takens}
Let $H$ be a proper Morse function on an oriented $4$-manifold.
 Assume that $H$ is invariant under a circle action that is free outside the critical points. Let $h_c$ be a critical value of $H$ containing exactly one critical point. Then the Chern numbers 
 of the nearby levels satisfy
 $$c(h_c+\varepsilon) = c(h_c-\varepsilon) 
 \pm 1.
 $$
 Here the sign is plus if the circle action defines the anti-Hopf fibration near the critical point and minus for the Hopf fibration. 
 
\end{theorem}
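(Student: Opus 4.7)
The strategy is to localize the calculation around the critical point $P$ and to reduce the change in Chern number to the model computation of Lemma~\ref{hopfantihopf1}. As a first step, the slice theorem together with an $S^1$-equivariant Morse lemma provides equivariant coordinates $(z,w)\in\mathbb C^2$ on an open neighbourhood $U$ of $P$ in which the circle action takes the normal form of Definition~\ref{defsign} and $H-h_c=\pm|z|^2\pm|w|^2$, the signs being dictated by the Morse index of $P$. Fix a small closed equivariant ball $B\subset U$ and $\varepsilon>0$ small enough that $\partial B$ is transverse to $H^{-1}(h_c\pm\varepsilon)$ and no other critical value of $H$ lies in $[h_c-\varepsilon,h_c+\varepsilon]$.

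Outside $B$ the critical point plays no role. Averaging any gradient-like vector field against the circle action yields an $S^1$-equivariant vector field $X$ with $X(H)>0$ on $M\setminus B$; its flow provides an equivariant diffeomorphism
\begin{equation*}
 \Phi\colon H^{-1}(h_c-\varepsilon)\setminus B^{\circ}\;\xrightarrow{\sim}\;H^{-1}(h_c+\varepsilon)\setminus B^{\circ}.
\end{equation*}
Passing to the quotient by $S^1$, this identifies the ``exterior pieces'' of the two circle bundles, so the entire change $c(h_c+\varepsilon)-c(h_c-\varepsilon)$ is concentrated inside $B$.

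To extract it, consider the compact equivariant cobordism $W^{\ast}=H^{-1}([h_c-\varepsilon,h_c+\varepsilon])\setminus B^{\circ}$, on which the circle action is free; the quotient $W^{\ast}/S^1$ is then a compact $3$-manifold with corners, and $W^{\ast}\to W^{\ast}/S^1$ is a principal $S^1$-bundle. Choose an invariant connection with curvature $F$. Stokes' theorem applied to $F/(2\pi)$ on $W^{\ast}/S^1$ gives a linear relation between its three boundary integrals, namely those over $(H^{-1}(h_c\pm\varepsilon)\setminus B^{\circ})/S^1$ and over $(\partial B\cap W^{\ast})/S^1$. Closing up the first two pieces with the caps $(H^{-1}(h_c\pm\varepsilon)\cap B)/S^1$, and applying Stokes once more inside $B$ to the regions $\{H\ge h_c+\varepsilon\}\cap B$ and $\{H\le h_c-\varepsilon\}\cap B$ (neither of which contains $P$), converts the relation into
\begin{equation*}
 c(h_c+\varepsilon)-c(h_c-\varepsilon)=c\bigl(\partial B\to\partial B/S^1\bigr).
\end{equation*}
Since the restricted action on $\partial B$ is the Hopf or anti-Hopf fibration, Lemma~\ref{hopfantihopf1} evaluates the right-hand side to $-1$ or $+1$ respectively, as the theorem asserts.

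The principal obstacle is the orientation bookkeeping underlying the last computation: $W^{\ast}/S^1$ has three boundary pieces meeting along corners, and the outward orientation of $\partial B$ as a component of $\partial W^{\ast}$ is opposite to its natural orientation as $\partial B$. It is precisely this sign reversal that pairs the formula above correctly with the signs in Lemma~\ref{hopfantihopf1}. A pleasant feature of the approach is its uniformity: minimum, maximum, and saddle critical points, for which the level set respectively gains a component, loses a component, or undergoes an equivariant surgery, are all handled by the same Stokes identity without any case split.
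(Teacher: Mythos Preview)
Your argument is correct and rests on the same core idea as the paper's proof: remove an equivariant neighbourhood of the critical point, apply Stokes on the free cobordism that remains, and identify the leftover boundary contribution with the Chern number $\pm 1$ of the (anti-)Hopf fibration via Lemma~\ref{hopfantihopf1}. The organisation, however, differs. The paper excises the Morse handle $D^{\lambda}\times D^{4-\lambda}$ and invokes the equivariant handle-attachment theorem (citing Milnor and Wasserman): it first obtains $c(\partial X)=c(h_c-\varepsilon)+c(S^3)$ by additivity across the handle boundary $S^3=(D^{\lambda}\times S^{3-\lambda})\cup(S^{\lambda-1}\times D^{4-\lambda})$, and then runs a separate cobordism $Y$ (the closure of $H^{-1}[h_c-\varepsilon,h_c+\varepsilon]\setminus D^{\lambda}\times D^{4-\lambda}$) to show $c(\partial X)=c(h_c+\varepsilon)$. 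You instead excise a round equivariant ball $B$ and do everything with a single Stokes identity on the manifold-with-corners $W^{\ast}/S^1$, capping off inside $B$ afterwards. Your route is more elementary in that it bypasses the handle decomposition entirely and is, as you note, uniform across Morse indices; the price is the orientation bookkeeping on a boundary with corners. The paper's route has cleaner (cornerless) boundaries but imports more from equivariant Morse theory. Two minor remarks: the exact normal form $H-h_c=\pm|z|^2\pm|w|^2$ is stronger than what you actually use---only the linearised action on $\partial B$ and transversality of $\partial B$ to the nearby level sets are needed---and the equivariant diffeomorphism $\Phi$ built from the gradient-like flow is superfluous once the Stokes argument on $W^{\ast}$ is in place.
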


\begin{proof}
The main idea is to apply Morse theory to the function $H$. The role of Euler characteristic in standard Morse theory will be played by the Chern number. We note that the Chern number, just like 
the Euler characteristic, is additive.

From Morse theory \cite{Milnor1963}, 
we have that the manifold $H^{-1}(-\infty,h_c+\varepsilon]$ can be obtained from the manifold
$H^{-1}(-\infty,h_c-\varepsilon]$ by attaching a handle $D^{\lambda} \times D^{4-\lambda}$, where $\lambda$ is the index of the critical point on the level $H^{-1}(h_c).$
More specifically, for a suitable neighbourhood  $D^{\lambda} \times D^{4-\lambda} \subset M$   
of the critical point (with $D^m$ standing for an $m$-dimensional ball), $H^{-1}(-\infty,h_c+\varepsilon]$ deformation retracts onto
the set
\begin{equation*} \label{equation/proof_Takens_1}
 X = H^{-1}(-\infty,h_c-\varepsilon] \cup D^{\lambda} \times D^{4-\lambda}
\end{equation*}
and, moreover,
\begin{equation} \label{eq/Morse}
 H^{-1}(-\infty,h_c+\varepsilon] \simeq X = H^{-1}(-\infty,h_c-\varepsilon] \cup D^{\lambda} \times D^{4-\lambda}
\end{equation}
up to a diffeomorphism. We note that by the construction, the
intersection of the handle $D^{\lambda} \times D^{4-\lambda}$ with $H^{-1}(-\infty,h_c-\varepsilon]$ is the subset $S^{\lambda-1} \times D^{4-\lambda} \subset H^{-1}(h_c-\varepsilon);$ see \cite{Milnor1963}. For simplicity, we shall assume that  the handle is disjoint from $H^{-1}(h_c +\varepsilon)$. By taking the boundary  in Eq~\eqref{eq/Morse}, we get that
\begin{equation} \label{equation/proof_Takens}
 H^{-1}(h_c+\varepsilon) \simeq \partial X =  (H^{-1}(h_c-\varepsilon) \setminus S^{\lambda-1} \times D^{4-\lambda}) \cup D^{\lambda} \times S^{4-\lambda-1}.
\end{equation}
Here the union $(D^{\lambda} \times S^{4-\lambda-1}) \cup (S^{\lambda-1} \times D^{4-\lambda})$ is the boundary $S^3 = \partial(D^{\lambda} \times D^{4-\lambda})$ of the handle.

Since we assumed the existence of a global circle action on $M$, 
we can choose the handle and its boundary $S^3$ to be invariant with respect to this action \cite{Wasserman1969}. 
This will allow us 
to relate the  Chern numbers of $H^{-1}(h_c+\varepsilon)$ and $H^{-1}(h_c-\varepsilon)$ using 
Eq.~\eqref{equation/proof_Takens}. 
Specifically, 
due to the invariance under the circle action, the sphere  
$S^3$
has a well-defined Chern number. Moreover, since the action is assumed to be free outside the critical points of $H$, this Chern number
$c(S^3) = \pm 1$, depending on whether the circle action defines the anti-Hopf or the Hopf fibration on $S^3$; see Lemma~\ref{hopfantihopf1}.
From Eq.\eqref{equation/proof_Takens} and the additive property of the Chern number, we get
$$c(\partial X) = c(h_c-\varepsilon) + c(S^3) = c(h_c-\varepsilon) \pm 1.$$
It is left to show that $c(h_c+\varepsilon) = c(\partial X)$ (we note that even though we know that $H^{-1}(h_c+\varepsilon)$ and $\partial X$ are diffeomorphic, we cannot yet conclude that they have the same Chern numbers). 

Let the subset $ Y \subset M$ be defined as the closure of the set
$$
H^{-1}[h_c -\varepsilon, h_c +\varepsilon] \setminus D^{\lambda} \times D^{4-\lambda}.
$$
We observe that $Y$ is a compact submanifold of $M$ and that $\partial Y = \partial X \cup H^{-1}(h_c+\varepsilon),$ that is, $Y$ is a cobordism in $M$ between $\partial X$ and $H^{-1}(h_c+\varepsilon)$. 
By the construction,
$\partial Y$ is invariant under the circle action and there are no critical points of $H$ in $Y$.
It follows that the Chern number  $c(\partial Y) = 0$. Indeed, one can apply Stokes's theorem to the Chern class of $\rho \colon Y \to Y / \mathbb S^1$, where $\rho$ is the reduction map; see Appendix~\ref{appendix/chern}. This concludes the proof of the theorem.
\end{proof}

\begin{remark}
We note that (an analogue of) Theorem~\ref{theorem/Takens} holds also when the Hamiltonian function $H$ has $k > 1$ isolated critical points on a critical  level. In this case 
$$c(h_c+\varepsilon) = c(h_c-\varepsilon) 
 + \sum\limits_{i = 1}^k s_k,
 $$
 where $s_k = \pm 1$ corresponds to the $k$th critical point.
\end{remark}

\begin{remark}
By a continuity argument, the (integer) Chern number is locally constant. This means that if $[a,b]$ does not contain critical values of $H$, then $c(h)$ is the same for all
the values $h \in [a,b]$.
On the other hand, by
Theorem~\ref{theorem/Takens}, the Chern number $c(h)$ changes when $h$ passes a critical value which corresponds to a single critical point.
\end{remark}

\section{Morse theory approach to monodromy} \label{sec/Morse}

The goal of the present section is to show how Takens's index theorem can be used to compute Hamiltonian monodromy. First, we demonstrate our method 
on a famous example of a system with non-trivial monodromy: the
\textit{spherical pendulum}. Then, we give a new proof of the \textit{geometric monodromy theorem} along similar lines. We also show that the jump in the energy level Chern number manifests 
non-triviality of Hamiltonian monodromy in the general case. This section is concluded with studying Hamiltonian monodromy in an example of an integrable system with two focus-focus points.

\subsection{Spherical pendulum} \label{subsec/spherical_pendulum}
The spherical pendulum
describes the motion of a particle moving on the unit sphere 
\begin{align*}
S^2 = \{ (x,y,z) \in \mathbb R^3 \colon x^2 + y^2 + z^2 = 1 \}
\end{align*}
in the linear gravitational potential $V(x,y,z) = z.$
The corresponding Hamiltonian system is given by $$(T^{*}S^2, \Omega|_{T^{*}S^2}, H|_{T^{*}S^2}),  \mbox{ where } H = \frac{1}{2}(p_x^2+p_y^2+p_z^2) + V(x,y,z)$$ is the total energy of the
pendulum and $\Omega$ is the standard symplectic structure. We observe that the function $J = xp_y - yp_x$ (the component of the total angular momentum about the $z$-axis) is conserved. It follows 
that the system is Liouville integrable.
The \textit{bifurcation diagram} of the  energy-momentum map 
$$F = (H,J) \colon T^{*}S^2 \to \mathbb R^2,$$ that is, the set of the critical values of this map, is shown in Fig.~\ref{fig/ct1}.

\begin{figure}[ht]
\begin{center}
\includegraphics[width=0.95\linewidth]{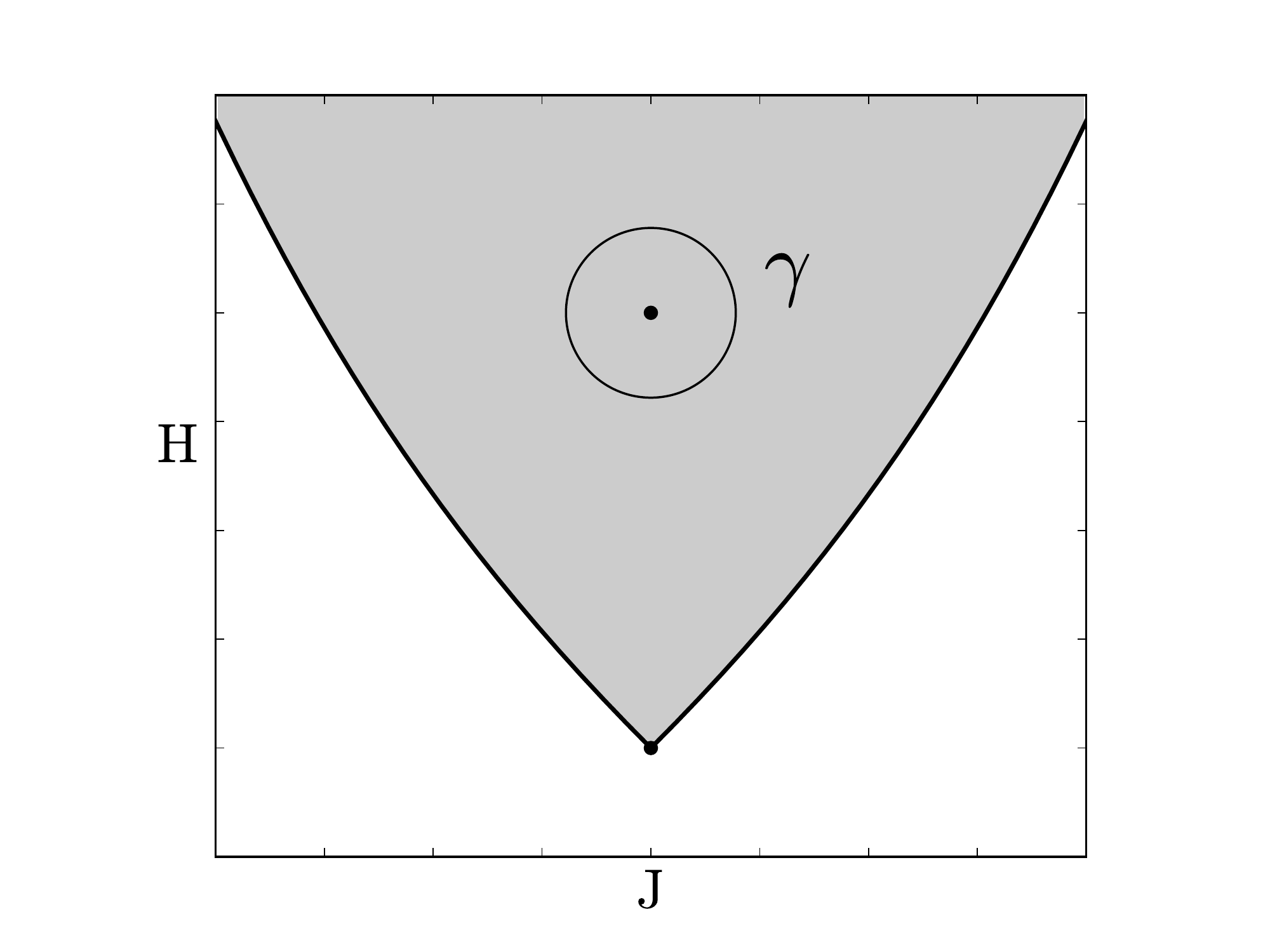}
\caption{Bifurcation diagram for the spherical pendulum and the loop $\gamma$ around the focus-focus singularity. }
\label{fig/ct1}
\end{center}
\end{figure}

From the bifurcation diagram we see that the set $R \subset \textup{image}(F)$ of the regular values of $F$ (the shaded area in Fig.~\ref{fig/ct})
is an open subset of $\mathbb R^2$ with one puncture. Topologically, $R$ is an annulus and hence 
$\pi_1(R, f_0) = \mathbb Z$ for any $f_0 \in R$. We note that
the puncture (the black dot in Fig.~\ref{fig/ct1}) corresponds to an isolated singularity; specifically, 
to the unstable equilibrium of the pendulum. 

Consider the closed path $\gamma$ around the puncture that is shown in Fig.~\ref{fig/ct1}. Since $J$ generates a Hamiltonian circle action on $T^{*}S^2$, any orbit of this action
on $F^{-1}(\gamma(0))$ can be transported along $\gamma$. Let $(a,b)$ be a basis of $H_1(F^{-1}(\gamma(0)))$, where $b$ is given by the homology class of such an orbit. Then the corresponding Hamiltonian
monodromy matrix along $\gamma$ is given by
\begin{equation*}
 M_\gamma = \begin{pmatrix} 1 & m_\gamma \\ 0 & 1\end{pmatrix}
 \end{equation*}
 for some integer $m_\gamma$. It was shown in \cite{Duistermaat1980} that $m_\gamma = 1$ (in particular, global action coordinates do not exist in this case).
Below we shall show how this result follows from Theorem~\ref{theorem/Takens}.

First we recall the following argument due to Cushman, which
shows that the monodromy along the loop $\gamma$ is non-trivial; the argument appeared in \cite{Duistermaat1980}.

{\bfseries Cushman's argument.}
First
observe that the points 
$$P_{min} = \{p = 0, z = -1\} \ \mbox{ and } \ P_{c} = \{p = 0, z = 1\}$$ are the only critical points of $H$. The corresponding critical values 
are $h_{min} = -1$ and $h_c = 1$, respectively. The point $P_{min}$ is the global and non-degenerate minimum of $H$ on $T^{*}S^2$.
From the Morse lemma, we have that $H^{-1}(1 - \varepsilon), \ \varepsilon \in (0,2),$ is diffeomorphic to the $3$-sphere $S^3$. On the other hand,  $H^{-1}(1 + \varepsilon)$
is diffeomorphic to the unit cotangent bundle $T^{*}_1S^2$. It follows that the monodromy index $m_\gamma \ne 0$.
Indeed, the energy levels $H^{-1}(1 + \varepsilon)$ and $H^{-1}(1 - \varepsilon)$ are isotopic, respectively, to  $F^{-1}(\gamma_1)$ and $F^{-1}(\gamma_2)$,
where $\gamma_1$ and $\gamma_2$ are the curves shown in Fig.~\ref{fig/ct}. 
If $m_\gamma = 0$, then the preimages $F^{-1}(\gamma_1)$ and $F^{-1}(\gamma_2)$  would be homeomorphic, 
which is not the case. \hfill $\square$

\begin{figure}[ht]
\begin{center}
\includegraphics[width=0.98\linewidth]{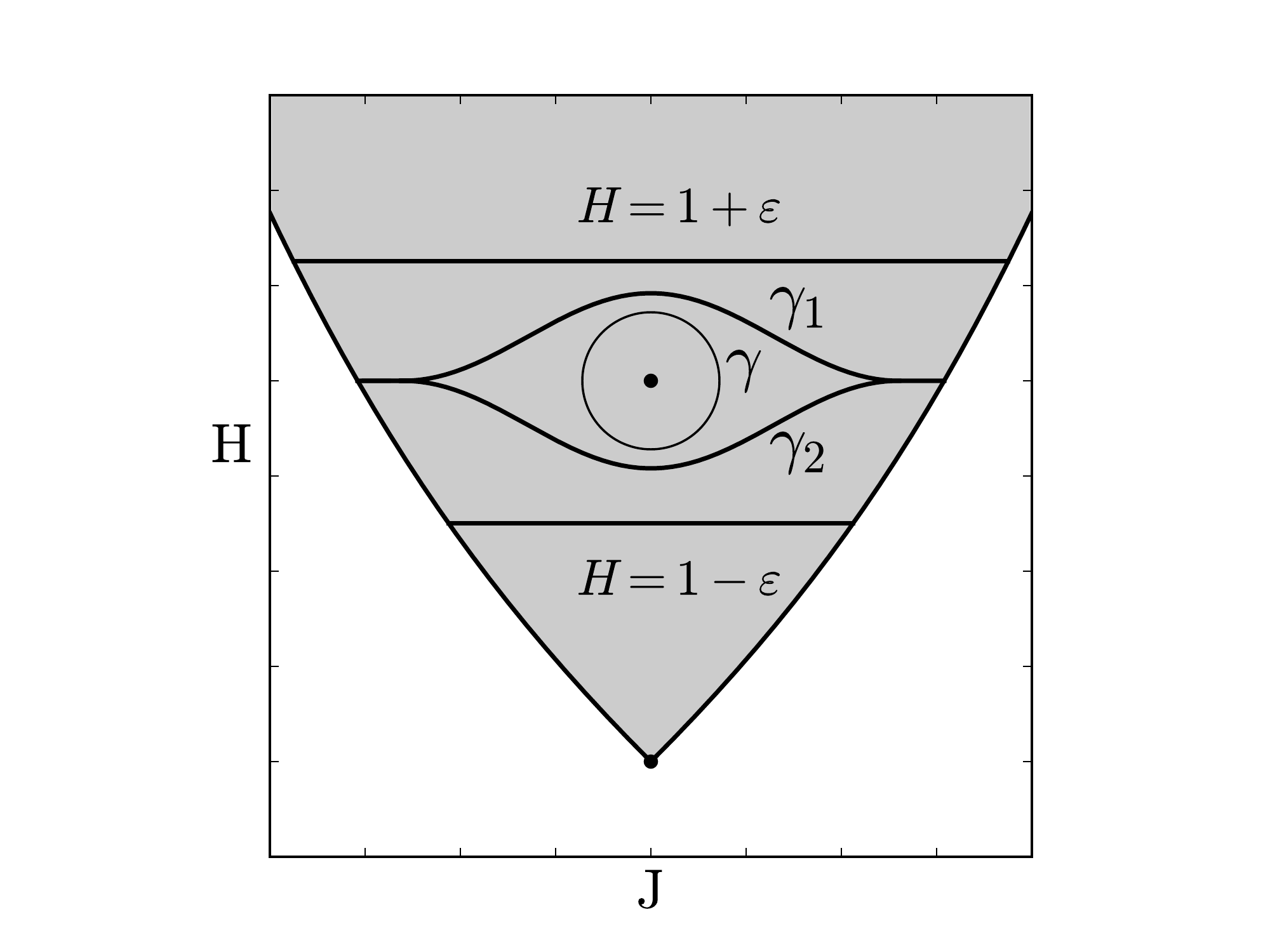}
\caption{Bifurcation diagram for the spherical pendulum, the energy levels, the curves $\gamma_1$ and $\gamma_2$, and the loop $\gamma$ around the focus-focus singularity. }
\label{fig/ct}
\end{center}
\end{figure}

Using Takens's index theorem~\ref{theorem/Takens}, we shall now make one step further and compute the monodromy index $m_\gamma$. By Takens's index theorem, the energy-level Chern numbers are related via
$$c(1+\varepsilon) = c(1-\varepsilon) 
 + 1
 $$
 since the critical point $P_c$ is of \textit{focus-focus} type. Note that focus-focus points are positive by Theorem~\ref{zungaction1}; for a definition of focus-focus points
 we refer to \cite{Bolsinov2004}.  
 
 Consider again the curves $\gamma_1$ and $\gamma_2$ shown in Fig.~\ref{fig/ct}. Observe that $F^{-1}(\gamma_1)$ and $F^{-1}(\gamma_2)$ are invariant under the circle action 
 given by the Hamiltonian flow of $J$. Let $c_1$ and $c_2$ denote the corresponding Chern numbers. By the isotopy, we have that 
 $c_1 = c(1+\varepsilon)$ and $c_2 = c(1-\varepsilon).$ 
 In particular, $c_1 =  c_2 + 1$.
 
 Let $\delta > 0$ be sufficiently small. Consider the following set 
$$S_{-} = \{x \in F^{-1}(\gamma_1) \mid J(x) \le j_{min}+\delta\},$$ 
where $j_{min} $ is the minimum value of the momentum $J$ on $F^{-1}(\gamma_1)$. Similarly, we define the set
$$S_{+} = \{x \in F^{-1}(\gamma_1) \mid J(x) \ge j_{max}-\delta\}.$$ 
By the construction of the curves $\gamma_i$, the sets $S_{-}$ and $S_{+}$ are contained in both $F^{-1}(\gamma_1)$ and $F^{-1}(\gamma_2)$. 
Topologically, these sets are solid tori.

Let $(a_{-},b_{-})$ be two basis cycles on $\partial S_{-}$ such that $a_{-}$ is the meridian and $b_{-}$ is an orbit of the circle action. Let 
$(a_{+},b_{+})$ be the corresponding cycles on $\partial S_{+}$. The preimage $F^{-1}(\gamma_i)$ is homeomorphic to the space obtained by gluing these pairs of cycles by
$$
\begin{pmatrix}
 a_{-}  \\
 b_{-} 
\end{pmatrix} = \begin{pmatrix}
 1 & c_i \\
 0 & 1
\end{pmatrix} \begin{pmatrix}
 a_{+} \\
 b_{+}
\end{pmatrix},
$$
where $c_i$ is the Chern number of $F^{-1}(\gamma_i)$. It follows that the monodromy matrix along $\gamma$ is given by the  product
$$
M_\gamma = \begin{pmatrix}
 1 & c_1 \\
 0 & 1
\end{pmatrix} \begin{pmatrix}
 1 & c_2 \\
 0 & 1
\end{pmatrix}^{-1}.
$$
Since $c_1 = c_2 
 + 1,$ we conclude that the monodromy matrix
$$
M_\gamma = \begin{pmatrix}
 1 & 1 \\
 0 & 1
\end{pmatrix}.
$$

\begin{remark} \textit{(Fomenko-Zieschang theory)} \label{remark/Fomenko_Zieschang_Hamiltonian_monodromy}
The cycles $a_{\pm}, b_{\pm}$, which we have used when expressing $F^{-1}(\gamma_i)$ as a result of gluing two solid tori, are \textit{admissible} in the sense of Fomenko-Zieschang theory 
\cite{Fomenko1990, Bolsinov2004}. It follows, in particular, that the Liouville fibration of $F^{-1}(\gamma_i)$ is determined by the \textit{Fomenko-Zieschang invariant} (the \textit{marked molecule})
$$
A^* \raisebox{.5ex}{\rule{2.1cm}{.20pt}} \mbox{ }^{\hspace{-2.04cm}r_i = \infty, \hspace{1mm} \varepsilon = 1, \hspace{1mm} n_i} \  A^*
$$
with the $n$-\textit{mark} $n_i$ given by the Chern number $c_i$. (The same is true for the regular energy levels $H^{-1}(h)$.) Therefore, our results show that Hamiltonian monodromy 
is also given by the jump in the  $n$-mark. We note that the $n$-mark and the other 
labels in the Fomenko-Zieschang invariant are also defined in the case when no global circle action exists.
\end{remark}

\subsection{Geometric monodromy theorem} \label{subsec/geometric_monodromy_theorem}
A common aspect of most of the systems with non-trivial Hamiltonian monodromy is that the corresponding 
energy-momentum map has focus-focus points, which, from the perspective of Morse theory, are saddle points of the Hamiltonian function.

The following result, which is sometimes referred to as the \textit{geometric monodromy theorem}, characterizes monodromy around a focus-focus singularity in systems with two degrees of freedom.

\begin{theorem} \textup{(Geometric monodromy theorem, \cite{Matsumoto1989, Lerman1994, Matveev1996, Zung1997})} \label{theorem/geometric_monodromy_theorem0}
Monodromy around a focus-focus singularity is given
 by the matrix
 \begin{equation*}
 M = \begin{pmatrix} 1 & m \\ 0 & 1\end{pmatrix},
 \end{equation*}
 where $m$ is the number of the focus-focus points on the singular fiber.
\end{theorem}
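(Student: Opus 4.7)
The plan is to reduce the geometric monodromy theorem to the same strategy used for the spherical pendulum in Section~\ref{subsec/spherical_pendulum}, now carried out in a saturated neighbourhood of the singular focus-focus fibre. First I would pick a small loop $\gamma$ in the base around the focus-focus critical value $f_c$, bounding a disk $D$ whose only critical value of the energy-momentum map $F = (H,J)$ is $f_c$ itself. The key ingredient is Zung's theorem (invoked already in the paper as Theorem~\ref{zungaction1}) which furnishes, on a saturated neighbourhood $U = F^{-1}(D)$ of the singular fibre, a Hamiltonian $\mathbb{S}^1$-action that is free outside the $m$ focus-focus points and for which each such point is \emph{positive} in the sense of Definition~\ref{defsign}. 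This places us squarely in the hypotheses of Takens's index theorem, with $H$ playing the role of the Morse function on $U$ (after a small perturbation of $H$ within its level, if necessary, so that the $m$ focus-focus points sit on a common critical level of the restricted $H$).

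Next I would run the spherical pendulum argument verbatim. Choose curves $\gamma_1, \gamma_2$ in $D$ on either side of $f_c$, analogous to those in Fig.~\ref{fig/ct}, such that $F^{-1}(\gamma_i)$ is isotopic to an energy level $H^{-1}(h_c \pm \varepsilon)$ inside $U$. Applying the multi-critical-point version of Takens's theorem stated in the remark after Theorem~\ref{theorem/Takens}, and using that all $m$ focus-focus points contribute $s_k = +1$, one obtains
\begin{equation*}
c_1 - c_2 = c(h_c+\varepsilon) - c(h_c-\varepsilon) = m,
\end{equation*}
where $c_i$ is the Chern number of the principal $\mathbb{S}^1$-bundle $F^{-1}(\gamma_i) \to \gamma_i$.

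Then I would use the admissible-cycle construction from Remark~\ref{remark/Fomenko_Zieschang_Hamiltonian_monodromy}: decompose each $F^{-1}(\gamma_i)$ as the union of two solid tori $S_{\pm}$ (neighbourhoods of the orbits where $J$ attains its extrema on $F^{-1}(\gamma_i)$), glued along admissible bases $(a_\pm, b_\pm)$ via the matrix $\begin{pmatrix} 1 & c_i \\ 0 & 1 \end{pmatrix}$. Parallel transport of the $\mathbb{S}^1$-orbit cycle $b$ along $\gamma$ then yields the monodromy matrix
\begin{equation*}
M_\gamma = \begin{pmatrix} 1 & c_1 \\ 0 & 1 \end{pmatrix}\begin{pmatrix} 1 & c_2 \\ 0 & 1 \end{pmatrix}^{-1} = \begin{pmatrix} 1 & m \\ 0 & 1 \end{pmatrix},
\end{equation*}
as required.

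The step I expect to be the main obstacle is the first one: producing a genuine \emph{global} Hamiltonian circle action on the saturated neighbourhood $U$ of a multi-pinched focus-focus fibre, rather than just the local models near each singular point. One has to verify that the local $\mathbb{S}^1$-symmetries at the $m$ focus-focus points, which all carry the same positive orientation, assemble into a single smooth action on $U$ whose only fixed points are those $m$ points. Once this is in place, the remainder is an essentially formal application of Takens's index theorem together with the Fomenko-Zieschang gluing computation already exhibited in the spherical pendulum case.
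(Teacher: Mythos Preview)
Your overall strategy is close to the paper's, but there is a genuine gap, and it is not the one you flag. The global Hamiltonian $\mathbb{S}^1$-action on the saturated neighbourhood $U$ is \emph{not} the obstacle: Theorem~\ref{zungaction1}, which you already invoke, provides exactly this action (free away from the $m$ focus-focus points, all of which are positive). The real difficulty is that on $U=F^{-1}(D)$ the level sets $H^{-1}(h_c\pm\varepsilon)\cap U$ are manifolds \emph{with boundary}, so Takens's index theorem does not apply as stated, and the Chern numbers $c(h_c\pm\varepsilon)$ you want to compare are not defined. Equivalently, the arcs $\gamma_1,\gamma_2$ you draw in $D$ have endpoints at regular values of $F$, so $F^{-1}(\gamma_i)$ is a $3$-manifold with two torus boundary components rather than a closed $3$-manifold; the solid-torus decomposition $S_-\cup S_+$ and the gluing computation from Subsection~\ref{subsec/spherical_pendulum} therefore do not go through. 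In the spherical pendulum that decomposition works precisely because $\gamma_1,\gamma_2$ terminate on the elliptic boundary of the bifurcation diagram, and there is no such boundary in a local focus-focus neighbourhood.

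The paper handles this by first applying \emph{integrable surgery} to embed the focus-focus neighbourhood in a compact integrable system whose bifurcation diagram is a square of elliptic edges with the focus-focus value in the interior; this restores properness of $H$ and closedness of the relevant $3$-manifolds. After that, rather than invoking the multi-critical version of Takens's theorem directly, the paper perturbs the singular fibre $\mathbb{S}^1$-equivariantly into $m$ simple focus-focus fibres and multiplies the $m$ individual monodromy matrices $\begin{pmatrix}1&1\\0&1\end{pmatrix}$. Your idea of using the multi-critical remark after Theorem~\ref{theorem/Takens} in one step is a legitimate alternative \emph{once} the surgery has been performed, and is arguably more direct; but without the surgery (or some substitute that produces closed energy levels) the argument does not get off the ground.
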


 A related  result in the context of the focus-focus singularities is that they come  with a Hamiltonian circle action \cite{Zung1997, Zung2002}.

\begin{theorem}\textup{(Circle action near focus-focus, \cite{Zung1997, Zung2002})} \label{zungaction1} 
In a neighbourhood of a focus-focus fiber \footnote{that is, a singular fiber containing a number of focus-focus points.}, there exists a unique (up to orientation reversing) Hamiltonian 
circle action which is free everywhere except for the singular focus-focus points.
Near each singular point, the momentum of the circle action can be written as
\begin{align*}
  J = \frac12 (q_1^2+p_1^2) - \frac12 (q_2^2+p_2^2)
\end{align*}
for some local canonical coordinates $(q_1, p_1, q_2, p_2)$. In particular, the circle action defines the anti-Hopf fibration near each singular point.
\end{theorem}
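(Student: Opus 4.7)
The plan is to combine the symplectic normal form at each focus-focus singular point (Eliasson's theorem) with a gluing argument that extends the resulting local circle actions to a full neighbourhood of the singular fiber, and then to verify uniqueness.

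First I would use Eliasson's linearization theorem for non-degenerate focus-focus singularities: near each focus-focus point $p_i$ on the singular fiber there exist local canonical coordinates $(q_1,p_1,q_2,p_2)$ and a local diffeomorphism $\varphi_i$ of the image of the momentum map such that $\varphi_i \circ F$ becomes the standard focus-focus pair. After an obvious symplectic change (switching to complex coordinates $z_j = q_j + i p_j$) this pair contains the component
\begin{equation*}
J_i = \tfrac12(q_1^2+p_1^2) - \tfrac12(q_2^2+p_2^2),
\end{equation*}
whose Hamiltonian flow is $(z_1,z_2)\mapsto(e^{it}z_1, e^{-it}z_2)$, i.e.\ a $2\pi$-periodic flow that is free on the punctured neighbourhood. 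Because the two weights are $(+1,-1)$, restricting this action to a small invariant $3$-sphere around $p_i$ produces exactly the anti-Hopf fibration, which is the last statement of the theorem and also gives the sign appearing in Takens's index theorem.

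Next I would extend these local actions to a neighbourhood $U$ of the whole focus-focus fiber. On the punctured neighbourhood $U \setminus \{p_1,\dots,p_m\}$ the system is regularly Lagrangian-fibered, so by Arnol'd-Liouville one has a fibrewise action of $T^2$; define a candidate global momentum $J$ by declaring its Hamiltonian vector field to be the unique fibrewise tangent vector field whose restriction to a small punctured neighbourhood of each $p_i$ agrees with the Hamiltonian vector field of $J_i$. The key points to verify are that (i) this vector field is well defined and smooth across all regular fibres (which follows because the periods of the local $J_i$-flows give an integer-valued, hence locally constant, class in $H_1$ of the regular fibre), and (ii) it extends smoothly across each singular point (which is immediate in Eliasson coordinates). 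Integrating and normalising the period then yields a globally defined Hamiltonian $S^1$-action on $U$, free outside the focus-focus points, whose local model near each $p_i$ is the desired $J_i$.

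For uniqueness, I would argue that any second Hamiltonian $S^1$-action on $U$ with the stated freeness property must, on the regular part, be generated by an integer homology class $[\gamma]$ of the Liouville fibre, because $S^1$-orbits sweep out circles inside the fibres. Comparing $[\gamma]$ with the class of an orbit of the action constructed above using its prescribed behaviour near any single focus-focus point forces $[\gamma]=\pm[\gamma_{\mathrm{ref}}]$, which gives the claimed uniqueness up to orientation reversal. The main obstacle in this plan is step two, namely the smoothness and global consistency of the extension across the singular fiber when $m>1$: one must check that the locally defined actions coming from different Eliasson charts patch together, and that no monodromy-type obstruction prevents the resulting vector field from integrating to an $S^1$- (rather than merely $\mathbb{R}$-) action. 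This is handled by showing that the period of the flow is constant on the regular part (it is locally an integer multiple of $2\pi$) and that the singular fibre carries the class of the vanishing cycle of each focus-focus point as the same homology class, namely the orbit of the $S^1$-action.
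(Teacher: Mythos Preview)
The paper does not give its own proof of this theorem: it is stated with attribution to \cite{Zung1997, Zung2002} and then used as a black box (in particular, to justify that focus-focus points are positive fixed points of the circle action). There is therefore no proof in the paper to compare your proposal against.

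As for the proposal itself, your outline follows the standard route taken in Zung's papers: Eliasson's normal form to produce the local periodic flow, then a patching/period argument on the regular part to globalize, and finally a homology-class comparison for uniqueness. The sketch is sound, and you correctly identify the genuine difficulty as the global consistency of the patching when the focus-focus fiber has complexity $m>1$. Two points you would want to make more precise if you were to write this out in full: first, the claim that the ``vanishing cycles'' at different focus-focus points on the same fiber all represent the same homology class (up to sign) on a nearby regular torus is exactly what makes the local $S^1$-actions compatible, and this needs the description of a neighbourhood of the singular fiber as a chain of local models glued along cylinders; second, the uniqueness argument should rule out not just other primitive classes but also the possibility that another candidate action corresponds to a non-primitive class or to a class that does not extend smoothly across the singular points --- the constraint that the action be free on the punctured neighbourhood and extend over each $p_i$ is what pins down $[\gamma]$ uniquely up to sign.
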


 One implication of Theorem~\ref{zungaction1} is that it allows to prove the geometric monodromy theorem by looking at the circle action. 
 Specifically, one can apply the Duistermaat-Heckman theorem in this case;
 see
 \cite{Zung2002}. 
A slight modification of our argument, used in the previous Subsection~\ref{subsec/spherical_pendulum} to determine monodromy in the spherical pendulum,
results in another proof of the geometric monodromy theorem. We give this proof below. 

\renewcommand*{\proofname}{Proof of Theorem~\ref{theorem/geometric_monodromy_theorem0}}

\begin{proof}
By applying integrable surgery, we can assume that the bifurcation diagram consists of a square of elliptic singularities and a focus-focus singularity in the middle; see \cite{Zung2002}.
In the case when there is only one focus-focus point on the singular focus-focus fiber,
the proof reduces to the case of the spherical pendulum. Otherwise the configuration is unstable. Instead of a focus-focus fiber with $m$ singular points, one can consider 
a new  $\mathbb S^1$-invariant fibration such that it is arbitrary close to the original one and has $m$ simple (that is, containing only one critical point) focus-focus fibers; see 
Fig.~\ref{Figure1_GMTproof}. 
\begin{figure}[ht]
\begin{center}
\includegraphics[width=0.75\linewidth]{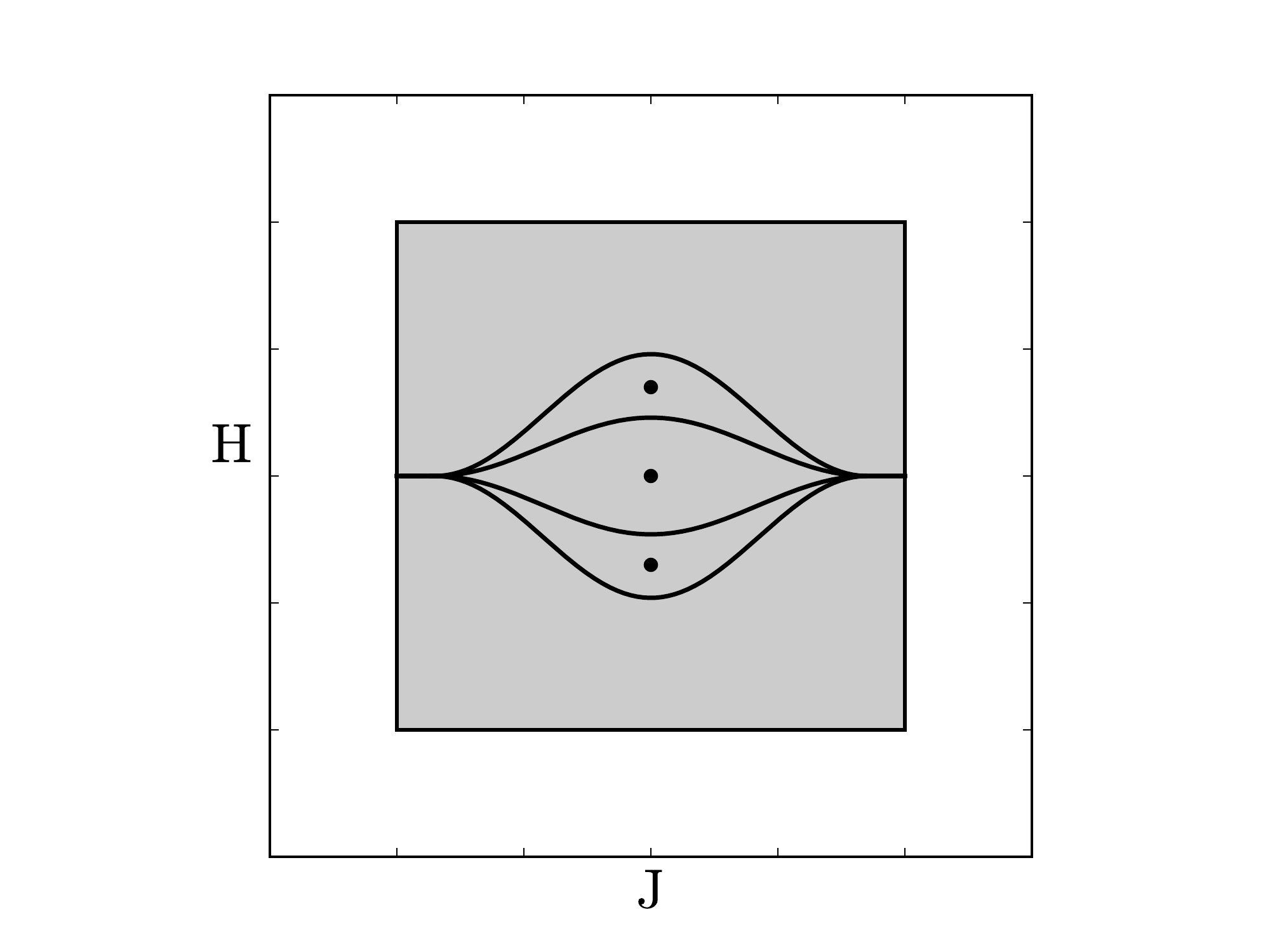}
\caption{Splitting of the focus-focus singularity; the complexity $m = 3$ in this example.}
\label{Figure1_GMTproof}
\end{center}
\end{figure}
As in the case of the spherical pendulum, we get that the monodromy matrix around each of the simple focus-focus fibers is given by the matrix 
$$
M_i = \begin{pmatrix}
 1 & 1 \\
 0 & 1
\end{pmatrix}.
$$
Since the new fibration is $\mathbb S^1$-invariant, the monodromy matrix around $m$ focus-focus fibers is given by the product of $m$ such matrices, that is,
$$
M_\gamma = M_1 \cdots M_m  = \begin{pmatrix}
 1 & m \\
 0 & 1
\end{pmatrix}.
$$
The result follows.
\end{proof}

\renewcommand*{\proofname}{Proof}

\begin{remark} \textit{(Duistermaat-Heckman)} Consider a symplectic $4$-manifold $M$ and a proper function $J$ that generates
a Hamiltonian  circle action on this manifold. Assume that the fixed points are isolated and that the action is free outside these points.
From the Duistermaat-Heckman theorem \cite{Duistermaat1982} it follows that 
 the symplectic volume $vol(j)$ of $J^{-1}(j)/\mathbb S^1$ is a piecewise linear function. Moreover, if $j = 0$ is a critical value with $m$ positive fixed points of the circle action, then
 $$ vol(j) + vol(-j) = 2vol(0)  - m j.$$
As was shown in \cite{Zung2002}, this result implies the geometric monodromy theorem since the symplectic volume can be viewed as the affine length of the line segment $\{J = j\}$ in the image of $F$.
The connection to our approach can be seen from the observation that the derivative $vol'(j)$ coincides with the Chern number of  $J^{-1}(j)$. 
We note that for the spherical pendulum,
the Hamiltonian does not generate a circle action, whereas
the $z$-component of the angular momentum is not a proper function. Therefore, neither of these functions can be taken as `$J$'; in order to use the Duistermaat-Heckman theorem, one 
needs to consider a local model first \cite{Zung2002}. 
Our approach, based on Morse theory, can be applied directly to the Hamiltonian of the spherical pendulum, even though it does not generate a circle action.
\end{remark}

\begin{remark} \textit{(Generalization)}\label{remark/generalizations}
We observe that even if a simple closed curve $\gamma \subset R$ bounds some complicated arrangement of singularities or, more generally, if the interior of $\gamma$ in $\mathbb R^2$ is not contained in the image of  the energy-momentum map $F$,  the 
monodromy along this curve can still be computed by looking at the energy level Chern numbers. 
Specifically, the  monodromy along $\gamma$ is given by
$$
M_\gamma = \begin{pmatrix}
 1 & m_\gamma \\
 0 & 1
\end{pmatrix},
$$
where $m_\gamma = c(h_2) - c(h_1)$ is the difference between the Chern numbers of two (appropriately chosen) energy levels.
\end{remark}

\begin{remark} \textit{(Planar scattering)}
We note that a similar result holds 
in the case of mechanical Hamiltonian systems on $T^{*}\mathbb R^2$ that are both scattering and integrable; see \cite{Martynchuk2016}. 
For such systems, the roles of the compact monodromy and the Chern number
are played by the \textit{scattering monodromy} and \textit{Knauf's scattering index} \cite{Knauf1999}, respectively. 
\end{remark}

\begin{remark} \textit{(Many degrees of freedom)}
 The approach presented in this paper depends on the use of energy-levels and their Chern numbers. For this reason, it cannot be directly generalized to systems with many degrees of freedom. 
 An approach that admits such a generalization was developed in \cite{EfstathiouMartynchuk2017, Martynchuk2017}; we shall recall it in the next section.
\end{remark}

\subsection{Example: a system with two focus-focus points}

Here we illustrate the Morse theory approach that we developed in this paper on a concrete example of an integrable system that has more than one focus-focus point. The system was introduced in
 \cite{HohlochPalmer2018}; it is
 an example 
of a 
\textit{semi-toric system} \cite{Vu-Ngoc2007, Sepe2014, DullinPelayo2016} with a special property that it has two distinct focus-focus fibers, which are not on the same 
level of the momentum corresponding to the circle action.

Let $S^2$ be the unit sphere in $\mathbb R^3$ and let $\omega$ denote its volume form, induced from $\mathbb R^3$. Take the product $S^2 \times S^2$ with the symplectic structure $\omega \oplus 2\omega$. 
The system introduced in \cite{HohlochPalmer2018} is an integrable system 
on $S^2 \times S^2$ defined in Cartesian coordinates $(x_1,y_1,z_1,x_2,y_2,z_2) \in \mathbb R^3 \oplus \mathbb R^3$ by the Poisson commuting functions
\begin{equation*}
 H = \frac{1}{4}z_1+\frac{1}{4}z_2 + \frac{1}{2}(x_1x_2+y_1y_2) \ \ \mbox{ and } \ \
 J = z_2 +2z_2.
\end{equation*}
The bifurcation diagram of the corresponding energy-momentum map 
$F = (H,J) \colon S^2\times S^2 \to \mathbb R^2$ is shown in Fig.~\ref{Figure/2focusfocus}.

\begin{figure}[ht]
\begin{center}
\includegraphics[width=0.9\linewidth]{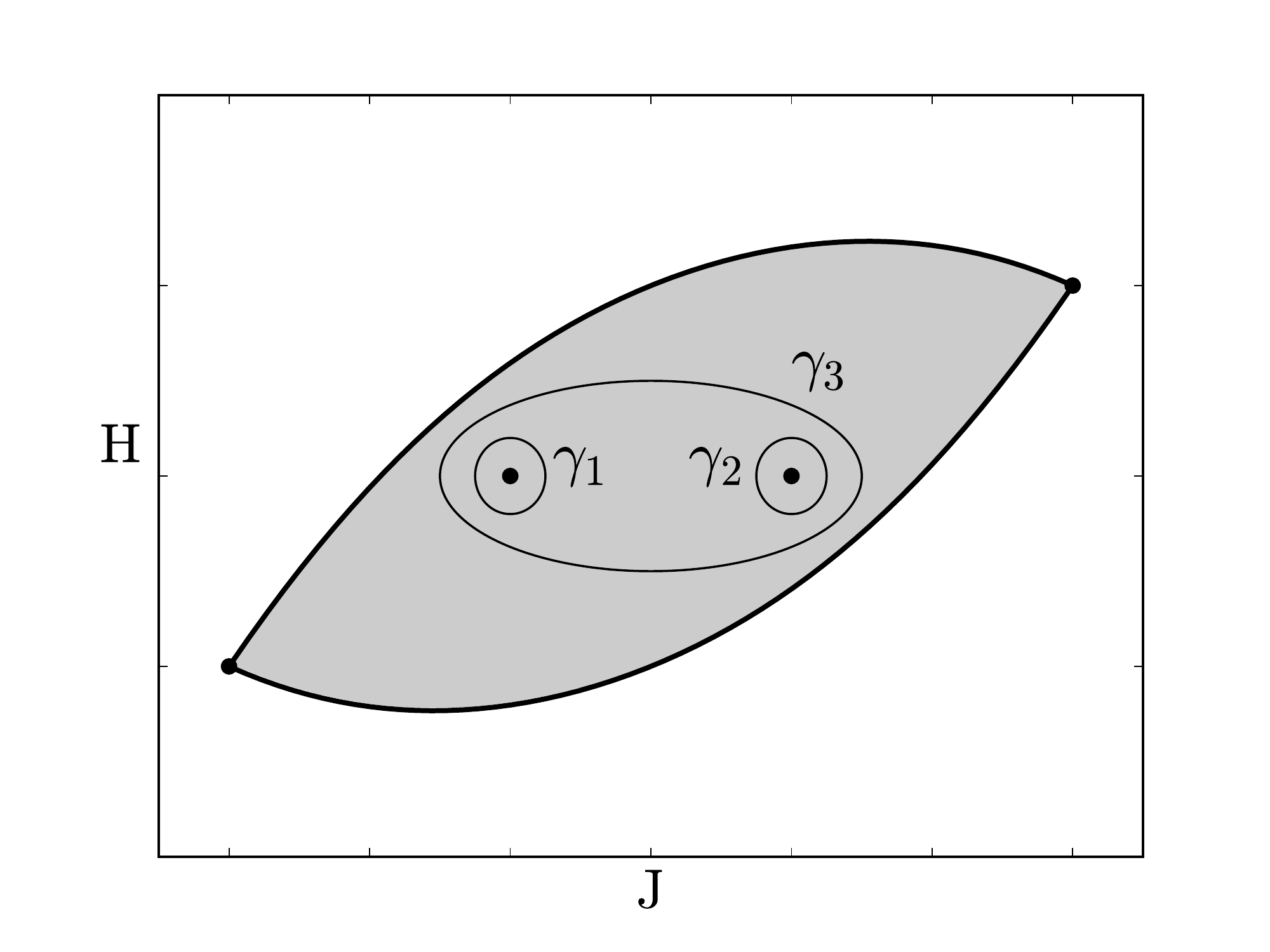}
\caption{The bifurcation diagram for the system on $S^2\times S^2$ and the loops $\gamma_1, \gamma_2, \gamma_3$ around the focus-focus singularities.}
\label{Figure/2focusfocus}
\end{center}
\end{figure}

The system has 4 singular points: two focus-focus and two elliptic-elliptic points. These singular points are $(S,S), (N,S), (S,N)$ and $(N,N)$, where $S$ and $N$ are the 
South and the North poles of $S^2$. Observe that these points are the fixed points of the circle action generated by the momentum $J$. The focus-focus points are positive fixed points 
(in the sense of Definition~\ref{defsign}) and the elliptic-elliptic points are negative. Takens's index theorem implies that the topology of the regular $J$-levels are
$S^3, S^2\times S^1,$ and $S^3$; the corresponding Chern numbers are $-1,0,$ and $1$, respectively. Invoking the argument in Subsection~\ref{subsec/spherical_pendulum} for the spherical pendulum 
(see also Subsection~\ref{subsec/geometric_monodromy_theorem}), we conclude\footnote{We note that Eq.~\ref{eq/monodromytwofocus} follows also from the geometric monodromy theorem since the circle action gives a universal sign for the monodromy around the two focus-focus points \cite{Cushman2002}. Our aim is to prove Eq.~\ref{eq/monodromytwofocus} by looking at the topology of the energy levels.} that the monodromy matrices along the curves $\gamma_1$ and $\gamma_2$ that encircle 
the focus-focus points (see Fig.~\ref{Figure/2focusfocus})
are
\begin{equation} \label{eq/monodromytwofocus}
M_1 = M_2 = \begin{pmatrix}
 1 & 1 \\
 0 & 1
\end{pmatrix}.
\end{equation}
Here the homology basis $(a,b)$ is chosen such that $b$ is an orbit of the circle action. 

\begin{remark}
Observe that the regular $H$-levels have the following topology: $S^2\times S^1, S^3, S^3,$ and $S^2\times S^1$. We see that the energy levels do not change their topology as the value 
of $H$ passes the critical value $0$, which corresponds to the two focus-focus points. Still, the monodromy around $\gamma_3$ is nontrivial. Indeed, in view of Eq.~\eqref{eq/monodromytwofocus} and the existence of 
a global circle action \cite{Cushman2002}, the monodromy along $\gamma_3$ is given by
\begin{equation*}
M_3 = M_1 \cdot M_2 = \begin{pmatrix}
 1 & 2 \\
 0 & 1
\end{pmatrix}.
\end{equation*}
The apparent paradox is resolved when one looks at the Chern numbers: the Chern number of the $3$-sphere below the focus-focus points is equal to $-1$, whereas 
the Chern number of the $3$-sphere above the focus-focus points is equal to $+1$. (The Chern number of   $S^2\times S^1$ is equal to $0$ in both cases.) We note that a similar kind of example of an 
integrable system for which  the monodromy is non-trivial and the energy levels do not change their topology, is given in \cite{Cushman2015} (see Burke's egg (poached)). In the case of
Burke's egg, the energy levels are non-compact; in the case of the system on $S^2\times S^2$ they are compact.
\end{remark}

\section{Symmetry approach} \label{sec/connections}

We note that one can avoid using energy levels by looking directly at the Chern number of $F^{-1}(\gamma)$, where $\gamma$ is the closed curve along which Hamiltonian monodromy is defined. 
This point of view was developed in the work \cite{EfstathiouMartynchuk2017}. It is based on the following two results.

\begin{theorem} \textup{(Fomenko-Zieschang, \cite[\S 4.3.2]{Bolsinov2004}, \cite{EfstathiouMartynchuk2017})} \label{theorem/Bolsinov_Fomenko_Zieschang}
Assume that the energy-momentum map $F$ is proper and invariant under a Hamiltonian circle action. 
Let $\gamma \subset \textup{image}(F)$ be  a simple closed curve in the set of the regular values of the map $F$.  Then
 the Hamiltonian
monodromy of the torus bundle $F \colon F^{-1}(\gamma) \to \gamma$ is given by
$$
\begin{pmatrix}
 1 & m \\
 0 & 1
\end{pmatrix} \in \mathrm{SL}(2,\mathbb Z),
$$
where $m$ is the Chern number of the principal circle bundle $\rho \colon F^{-1}(\gamma) \to  F^{-1}(\gamma) / \mathbb S^1$, defined by reducing the circle action.
\end{theorem}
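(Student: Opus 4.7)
The plan is to reduce the theorem to two facts: that the monodromy matrix must be unipotent upper-triangular in the chosen basis, and that its off-diagonal entry is exactly the Chern number. For the first fact, I would observe that since the circle action is defined globally on $M$ and $F$ is invariant under it, every regular fiber $F^{-1}(c)$ is a $2$-torus carrying the orbit class $b$, which varies continuously in $c\in\gamma$. Hence $b$ is preserved by Gauss-Manin parallel transport around $\gamma$. Picking $a\in H_1(F^{-1}(\gamma(0)))$ complementary to $b$ and orienting so that $(a,b)$ is positive, the monodromy matrix lies in $\mathrm{SL}(2,\mathbb Z)$, fixes $b$, and therefore has the form $\begin{pmatrix}1 & m \\ 0 & 1\end{pmatrix}$ for some integer $m$.

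Next I would identify $m$ with the Chern number of $\rho\colon F^{-1}(\gamma)\to B:=F^{-1}(\gamma)/\mathbb S^1$. First, $B$ is an orientable $S^1$-bundle over $\gamma=S^1$ (the $a$-circles of the reduced fibers), hence a $2$-torus. By Definition~\ref{def/Chern_number}, the Chern number of $\rho$ can be read off from an almost-global section $s\colon B\setminus\{p_0\}\to F^{-1}(\gamma)$: it is the degree of the map $\alpha\to s(\alpha)\to \rho^{-1}(p_0)$ for a small loop $\alpha$ around $p_0$. I would then choose a horizontal loop $\widetilde\gamma\subset B$ projecting homeomorphically onto $\gamma$ and missing $p_0$, and lift it via $s$ to a closed curve $s(\widetilde\gamma)\subset F^{-1}(\gamma)$. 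Since $s(\widetilde\gamma)$ meets the fiber $F^{-1}(\gamma(0))$ exactly once and projects once around $\gamma$, its homology class in the fiber is of the form $a+k\,b$ for some $k$; moreover $s(\widetilde\gamma)$ provides precisely the trivialization used to transport $a$ around $\gamma$, so the monodromy integer is this $k$.

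The main step, and expected obstacle, is to verify $k=m$ in a sign-consistent way. I would do this by the standard gluing picture: writing $F^{-1}(\gamma)\cong T^2\times[0,1]/(\alpha,\beta,1)\sim(\alpha,\beta+k\alpha,0)$, quotienting by the $\beta$-direction produces $B\cong T^2$ with a natural section outside a single defect point, and a direct computation shows that the winding of this section around the defect equals $k$. Equivalently, modifying the horizontal loop $\widetilde\gamma$ to pass on the opposite side of $p_0$ changes its lift by one full $b$-orbit for each unit of Chern number, by the defining property of the Chern class; so $k$ differs from the Chern number $m$ by an amount that a local model near $p_0$ forces to vanish. The delicate part is making the orientation conventions on $M$, the base $B$, and the fiber torus compatible so that the integers $m$ and $k$ agree with sign, which I would handle by fixing the orientation of $(a,b)$ to be the one induced by the orientation of $M$ and the circle action.
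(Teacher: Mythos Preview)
Your proposal is correct in outline and takes a genuinely different route from the paper's proof. The paper does \emph{not} argue via an almost-global section and a mapping-torus model. Instead it first proves an auxiliary result (Theorem~\ref{theorem/rotation_number_mon}): using the first return time $T$ and rotation number $\Theta$, it builds the secondary periodic vector field $X_S = \tfrac{T}{2\pi}X_H - \Theta X_J$ on $F^{-1}(\gamma\setminus f_0)$, and shows that the two limiting cycles $\alpha_0,\alpha_1$ of $X_S$ on $F^{-1}(f_0)$ differ by $m\,b_{f_0}$, where $-m$ is the variation of $\Theta$ around $\gamma$; this identifies $m$ as the monodromy integer. Then, to prove Theorem~\ref{theorem/Bolsinov_Fomenko_Zieschang}, the paper chooses a connection $1$-form $\sigma$ with $\sigma(X_J)=i$ and $\sigma(X_H)=0$, computes $\tfrac{i}{2\pi}\bigl(\int_{\alpha_0}\sigma-\int_{\alpha_1}\sigma\bigr)=m$, and applies Stokes over the cylinder swept out by the $X_S$-orbits to rewrite this as $\int_{F^{-1}(\gamma)/\mathbb S^1}\mathbf c_1$. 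So the paper's argument is dynamical/Chern--Weil, whereas yours is obstruction-theoretic.

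What each buys: the paper's detour through the rotation number is not gratuitous---it produces Theorem~\ref{theorem/rotation_number_mon} as a byproduct, linking monodromy to a classical dynamical invariant, and the connection-form computation handles the sign once and for all. Your approach is more elementary (no curvature, no Stokes) and closer in spirit to Definition~\ref{def/Chern_number}, but note one technical point: in your gluing model $T^2\times[0,1]/(\alpha,\beta,1)\sim(\alpha,\beta+k\alpha,0)$, the natural section $s(\alpha,t)=(\alpha,0,t)$ has its defect along the entire circle $\{t=0\}$ rather than at a single point, so the ``winding around $p_0$'' you invoke is really the degree of the transition cocycle $\alpha\mapsto k\alpha$ across that circle. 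This still gives $k$, but you should phrase it that way rather than appealing literally to Definition~\ref{def/Chern_number}. With that adjustment (and the orientation bookkeeping you already flag), your argument goes through.
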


In the case when the curve $\gamma$ bounds a disk $D \subset \textup{image}(F)$,  the Chern number $m$ can be computed from the singularities of the circle action that project into $D$. Specifically,
there is the following result.

 \begin{theorem} \textup{(\cite{EfstathiouMartynchuk2017})}  \label{theorem/EM}
Let $F$ and $\gamma$ be as in Theorem~\ref{theorem/Bolsinov_Fomenko_Zieschang}. Assume that $\gamma = \partial D$, where $D \subset \textup{image}(F)$ is a two-disk, and that
the circle action is free everywhere in $F^{-1}(D)$ outside isolated fixed points. Then 
the Hamiltonian monodromy of the $2$-torus bundle $F \colon F^{-1}(\gamma) \to \gamma$ is given by 
the number of positive singular points minus the number of negative singular points in $F^{-1}(D)$.
\end{theorem}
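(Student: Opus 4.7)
The plan is to combine Theorem~\ref{theorem/Bolsinov_Fomenko_Zieschang} with a localization-by-cobordism argument in the spirit of Takens's proof of Theorem~\ref{theorem/Takens}. By Theorem~\ref{theorem/Bolsinov_Fomenko_Zieschang} it suffices to show that the Chern number of the principal $\mathbb S^1$-bundle
$$
\rho \colon F^{-1}(\gamma) \to F^{-1}(\gamma)/\mathbb S^1
$$
equals $n_+ - n_-$, where $n_\pm$ denote the numbers of positive and negative fixed points of the circle action inside $F^{-1}(D)$. So I would drop the energy-momentum map from the picture after this first step and work only with the symplectic $4$-manifold $F^{-1}(D)$, its boundary $F^{-1}(\gamma)$, and the given circle action.

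Next, around each fixed point $P_i \in F^{-1}(D)$ I would choose a small equivariant open ball $B_i$ whose boundary $\partial B_i \cong S^3$ is invariant under the circle action; this is possible by the slice theorem used in Section~\ref{sec/Takens}. By the local normal form for the action recalled in Definition~\ref{defsign}, the induced circle bundle on $\partial B_i$ is the anti-Hopf fibration if $P_i$ is positive and the Hopf fibration if $P_i$ is negative. Hence, by Lemma~\ref{hopfantihopf1}, the Chern number of $\partial B_i$ equals $+1$ in the first case and $-1$ in the second.

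Now let $Y$ be the closure of $F^{-1}(D) \setminus \bigsqcup_i B_i$. Then $Y$ is a compact oriented cobordism between $F^{-1}(\gamma)$ and $\bigsqcup_i \partial B_i$, and on $Y$ the circle action is everywhere free by the hypothesis of the theorem. Therefore the reduction map $\rho_Y \colon Y \to Y/\mathbb S^1$ is an honest principal $\mathbb S^1$-bundle, and its Chern class is represented by a closed $2$-form on $Y/\mathbb S^1$. Applying Stokes's theorem to this form, exactly as in the proof of Theorem~\ref{theorem/Takens}, yields $c(\partial Y) = 0$, where $\partial Y$ inherits from $Y$ the boundary orientation. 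With the orientations taken into account, this reads
$$
c(F^{-1}(\gamma)) - \sum_i c(\partial B_i) = 0,
$$
and plugging in the values of $c(\partial B_i)$ from the previous paragraph gives $c(F^{-1}(\gamma)) = n_+ - n_-$, which completes the proof.

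The main obstacle I expect is bookkeeping of orientations: one must check that the orientation convention used in Definition~\ref{defsign} for the sign of a fixed point is consistent with the orientation that $\partial B_i$ inherits as a component of $\partial Y$ (as opposed to $\partial B_i$), and that the outward boundary orientation of $F^{-1}(\gamma)$ in $Y$ agrees with the orientation implicit in Theorem~\ref{theorem/Bolsinov_Fomenko_Zieschang}. Once this is pinned down, the rest is a direct application of Lemma~\ref{hopfantihopf1} and Stokes's theorem, so there is no further analytic content beyond what already appears in Section~\ref{sec/Takens}.
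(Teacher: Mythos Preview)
The paper does not actually give its own proof of Theorem~\ref{theorem/EM}; it merely states the result with a citation to \cite{EfstathiouMartynchuk2017} and then proceeds to give a new proof of Theorem~\ref{theorem/Bolsinov_Fomenko_Zieschang} instead. So there is nothing to compare against in the paper itself.

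Your argument is correct and is exactly the cobordism/Stokes reasoning used in the paper's proof of Takens's index theorem (Theorem~\ref{theorem/Takens}), transplanted from the energy cobordism $H^{-1}[h_c-\varepsilon,h_c+\varepsilon]$ to the $4$-manifold $F^{-1}(D)$. The only point worth making explicit is why $F^{-1}(D)$ is a compact oriented $4$-manifold with boundary $F^{-1}(\gamma)$: compactness follows from properness of $F$, and the manifold-with-boundary structure near $F^{-1}(\gamma)$ from the fact that $\gamma$ consists of regular values; points over $\textup{int}\,D$ are automatically interior by continuity of $F$. With this granted, excising the equivariant balls $B_i$, identifying each $\partial B_i$ as the Hopf or anti-Hopf fibration via Definition~\ref{defsign} and Lemma~\ref{hopfantihopf1}, and applying Stokes's theorem to the Chern class on $Y/\mathbb S^1$ goes through verbatim. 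The orientation check you flag is the only remaining detail, and it is routine: the boundary orientation of $\partial B_i$ as a component of $\partial Y$ is opposite to its orientation as $\partial B_i$, which is precisely what produces the minus sign in $c(F^{-1}(\gamma)) - \sum_i c(\partial B_i) = 0$.
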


 We note that Theorems~\ref{theorem/Bolsinov_Fomenko_Zieschang} and \ref{theorem/EM} were generalized to a much more general setting of fractional monodromy and Seifert fibrations; see \cite{Martynchuk2017}. 
 Such a generalization allows one, in particular, to define monodromy for circle bundles over 2-dimensional surfaces (or even orbifolds) of genus $g \ge 1$; in the standard case the genus  $g = 1$.

Let us now give a new proof of Theorem~\ref{theorem/Bolsinov_Fomenko_Zieschang}, which makes a connection to the rotation number. First we shall recall this notion.

We assume that the energy-momentum map $F$ is invariant under a Hamiltonian circle action. Without loss of generality, $F = (H,J)$ is such that
the circle action is given by the Hamiltonian flow $\varphi^t_J$  of $J$. 
Let $F^{-1}(f)$ be a regular torus. Consider a point $x \in F^{-1}(f)$ and the orbit of the circle action passing through this point. 
The trajectory $\varphi^t_H(x)$ leaves the orbit of the circle action at $t = 0$ and then returns back to the same orbit at some time $T > 0$. The time $T$ is called the \textit{the first return time}.
The \textit{rotation number} $\Theta = \Theta(f)$ is defined by $\varphi^{2\pi {\Theta}}_J(x) = \varphi^{T}_H(x)$.  There is the following result.

\begin{theorem} \textup{(Monodromy and rotation number, \cite{Cushman2015})} \label{theorem/rotation_number_mon}
 The Hamiltonian monodromy of the torus bundle $F \colon F^{-1}(\gamma) \to \gamma$ is given by
$$
\begin{pmatrix}
 1 & m \\
 0 & 1
\end{pmatrix} \in \mathrm{SL}(2,\mathbb Z),
$$
where $-m$ is the variation of the rotation number $\Theta$.
\end{theorem}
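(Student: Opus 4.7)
The plan is to compute the monodromy by working in local action--angle coordinates and tracking how the basis of $H_1$ of the fibre is twisted as one goes around $\gamma$. The first observation is that, since $J$ generates a global Hamiltonian circle action, the orbit class $b \in H_1(F^{-1}(\gamma(0)))$ is preserved under parallel transport along $\gamma$. Completing $b$ to a basis $(a, b)$, the monodromy matrix must therefore have the upper-triangular form $\begin{pmatrix} 1 & m \\ 0 & 1 \end{pmatrix}$ for some integer $m$, and the theorem reduces to the identity $\mathrm{Var}_\gamma(\Theta) = -m$.

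To prove this identity, I would lift $\gamma$ to its universal cover and transport $(a,b)$ continuously to a family $(a(s), b(s))$ along $\tilde\gamma \colon [0,1] \to \widetilde{R}$, with $a(1) = a(0) + m\,b(0)$ by the definition of $m$. The Arnold--Liouville theorem then supplies action--angle coordinates $(I_a, I_b, \varphi_a, \varphi_b)$ along $\tilde\gamma$ adapted to this basis, with $I_b$ equal to $J$ up to an additive constant and with both angles $2\pi$-periodic. Writing the frequencies of the $H$-flow as $\omega_a = \partial H/\partial I_a$ and $\omega_b = \partial H/\partial I_b$, the defining condition $\varphi_J^{2\pi\Theta}(x) = \varphi_H^T(x)$ translates directly into
\begin{equation*}
 T = \frac{2\pi}{\omega_a}, \qquad \Theta = \frac{\omega_b}{\omega_a},
\end{equation*}
presenting $\Theta$ as a smooth real-valued function of $s$ along $\tilde\gamma$.

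To close the loop I would compare the action--angle data at $s=0$ and $s=1$. Because $a(1) = a(0) + m\,b(0)$, the transported actions read $(I_a', I_b') = (I_a + m I_b, I_b)$, and rewriting $H$ in these coordinates transforms the frequencies as $\omega_a' = \omega_a$ and $\omega_b' = \omega_b - m\,\omega_a$. Taking the ratio yields $\Theta(1) = \Theta(0) - m$, so the variation of $\Theta$ along $\gamma$ equals $-m$, establishing the formula.

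The step I expect to need most care is the passage from local action--angle data to the global statement on $\gamma$. Action coordinates are only locally defined, and unrestricted $\mathrm{SL}(2,\mathbb Z)$ basis changes do not preserve the ratio $\omega_b/\omega_a$. The point is that the continuous transport of $(a(s), b(s))$ only involves basis changes fixing the orbit class $b$, and under precisely this restricted class the ratio $\omega_b/\omega_a$ is well-behaved. This is what allows $\Theta$ to descend to a well-defined smooth function on $\tilde\gamma$ whose $\pi_1(\gamma)$-monodromy is the integer $-m$ computed above.
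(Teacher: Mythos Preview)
Your argument is correct and in spirit the same as the paper's: both show that transporting the complementary cycle $a$ once around $\gamma$ shifts it by $m\,b$, and both read this shift off from the variation of $\Theta$. The packaging differs slightly. The paper constructs the second periodic vector field
\[
X_S = \tfrac{T}{2\pi}\,X_H - \Theta\,X_J
\]
directly on $F^{-1}(\gamma\setminus f_0)$; its orbits realise the cycle $a$, and comparing the two limiting cycles on $F^{-1}(f_0)$ gives the jump $(-\mathrm{Var}_\gamma\Theta)\,b$ immediately, since $\Theta$ is the only multi-valued ingredient in $X_S$. Your route via action--angle coordinates and the transformation law $\omega_b' = \omega_b - m\,\omega_a$ is the dual picture: in your coordinates $X_S$ is nothing but $\partial_{\varphi_a}$. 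The paper's version is marginally more direct (no need to invoke Arnol'd--Liouville along the lift or to rewrite $H$ as a function of the actions), while yours makes the computation purely algebraic once the coordinates are in place.
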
 
\begin{proof}
First we note that since the flow of $J$ is periodic on $F^{-1}(\gamma)$, the monodromy matrix is of the form 
$$
\begin{pmatrix}
 1 & m' \\
 0 & 1
\end{pmatrix} \in \mathrm{SL}(2,\mathbb Z)
$$
for some integer $m'$. 

Fix a starting point $f_0 \in \gamma.$ Choose a smooth branch of the rotation number $\Theta$ on $\gamma \setminus f_0$  and define the vector field $X_S$ on $F^{-1}(\gamma \setminus f_0)$ by
\begin{equation} \label{periodic_vector_field}
 X_S = \frac{T}{2\pi} X_H - \Theta X_J.
\end{equation}
 By the construction,
 the  flow of $X_S$ is periodic. However, unlike the flow of $X_J,$ it is not globally defined on $F^{-1}(\gamma).$
 Let $\alpha_1$ and $\alpha_0$ be the limiting cycles of this vector field on $F^{-1}(f_0),$ that is, let $\alpha_0$ be given by the flow of the vector field $X_S$ for $\xi \to f_0+$ and
 let $\alpha_1$ be given by the flow of $X_S$ for $\xi \to f_0-$. Then 
$$
  \alpha_1 = \alpha_0 + m b_{f_0},
$$
where $-m$ is the variation of the rotation number along $\gamma$. Indeed, if the variation of the rotation number  is $-m$, then the vector field
$\frac{T(f_0)}{2\pi} X_H - \Theta(f_0) X_J$ on $F^{-1}(f_0)$ changes to  $\frac{T(f_0)}{2\pi} X_H - (\Theta(f_0) - m) X_J$  after $\xi$ traverses  $\gamma.$
Since $\alpha_1$ is the result of the parallel transport of $\alpha_0$ along $\gamma$, we conclude that $m' = m$. The result follows.
\end{proof}

We are now ready to prove Theorem~\ref{theorem/Bolsinov_Fomenko_Zieschang}.
 
\begin{proof}

Take an invariant metric $g$ on $F^{-1}(\gamma)$ and define a connection $1$-form $\sigma$
 of the principal $\mathbb S^{1}$ bundle $\rho \colon E_\gamma \to E_\gamma / \mathbb S^{1}$ as follows: 
$$\sigma(X_J) = i \ \mbox{ and } \ \sigma(X_H) = \sigma(e) = 0,$$
where $e$ is orthogonal to $X_J$ and $X_H$ with respect to the metric $g$.
Since the flows $\varphi^t_H$ and $\varphi^\tau_J$ commute, $\sigma$ is indeed a connection one-form.

By the construction,
$$
\frac{i}{2\pi}\left(\int_{\alpha_0} \sigma - \int_{\alpha_1} \sigma \right) = -\frac{im}{2\pi} \int_{b_{f_0}} \sigma  =  m.
$$
Since $\alpha_0 \sqcup \alpha_1$ bounds a cylinder $C \subset F^{-1}(\gamma \setminus f_0),$
we also have
$$
m = \dfrac{i}{2\pi} \int_{C} d\sigma  = \int_{E_\gamma / \mathbb S^{1}} {\bf c}_1,
$$
where ${\bf c}_1$ is the Chern class of the circle bundle $\rho \colon E_\gamma \to E_\gamma / \mathbb S^{1}$.
The result follows.
\end{proof}

\section{Discussion} \label{sec/discussion}

In this paper we studied Hamiltonian monodromy in integrable two-degree of freedom Hamiltonian systems with a circle action. We showed how Takens's index theorem, which is based on Morse theory, can be used to 
compute Hamiltonian monodromy. In particular, we gave a new proof of the monodromy around a focus-focus singularity using the Morse theory approach. 
An important implication of our results is a connection of the geometric theory developed in the works \cite{Efstathiou2017, Martynchuk2017} to Cushman's argument,
which is also based on Morse theory. New connections to the rotation number and to Duistermaat-Heckman theory were also discussed.

\section{Acknowledgements}
We would like to thank Prof. A. Bolsinov and Prof. H. Waalkens for useful and stimulating discussions.
We would also like to  thank the anonymous referee for his suggestions for improvement.

\appendix

\section{Hamiltonian monodromy} \label{sec/Hamiltonian_monodromy}

A typical situation in which monodromy arises is the case of an integrable system on a $4$-dimensional symplectic manifold $(M^4, \Omega)$. Such a system is specified by the \textit{energy-momentum} (or the \textit{integral}) map
\begin{equation*}
F = (H, J) \colon M \to \mathbb R^2.
\end{equation*}
Here $H$ is the Hamiltonian of the system and the momentum $J$ is a `symmetry' function, that is, the Poisson bracket 
$$\{H, J\} = \Omega^{-1}(dJ, dH) = 0$$ vanishes. We will assume that the map $F$ is proper, that is, that preimages of compact sets are compact, and that the fibers
$F^{-1}(f)$ of 
$F$ are connected. Then 
near any regular value of $F$ the functions $H$ and $J$ can be combined into new functions $I_1 = I_1(H,J)$ and 
$I_2 = I_2(H,J)$ such that the symplectic form has the canonical form 
$$
\Omega = dI_1 \wedge d\varphi_1 + dI_2 \wedge d\varphi_2
$$
for some angle coordinates $\varphi_1, \varphi_2$ on the fibers of $F$. This follows from the Arnol'd-Liouville theorem \cite{Arnold1968}. We note that
the regular fibers of $F$ are tori and that the motion on these tori is quasi-periodic.

The coordinates $I_i$ that appear in the Arnol'd-Liouville theorem are called \textit{action coordinates}. It can be shown that if $pdq$ is a local primitive $1$-from of the symplectic form,
then these coordinates are given by the formula
\begin{equation} \label{eq/actions}
I_i = \int\limits_{\alpha_i} p dq,
\end{equation}
where $\alpha_i, \ i = 1,2,$ are two independent cycles on an Arnol'd-Liouville torus. 
However, this formula is local even if the symplectic form $\Omega$ is exact. The reason for this is that the cycles
$\alpha_i$ can not, generally speaking, be chosen for each torus $F^{-1}(f)$ in a such a way that the maps $f \mapsto \alpha_i(f)$ are continuous at all regular values $f$  of $F$. 
This is the essence of \textit{Hamiltonian monodromy}. Specifically,
it is defined as follows.

Let $R \subset \textup{image}(F)$ be the set of the regular values of $F$. Consider the restriction map $$
F \colon F^{-1}(R) \to R.
$$
We observe that this map is a torus bundle: locally it is a direct product $D^n \times T^n$, the trivialization being achieved by the action-angle coordinates.
Hamiltonian monodromy is  defined as a representation
\begin{align*}
  \pi_1(R,f_0) \to \textup{Aut}\,H_1(F^{-1}(f_0))
\end{align*}
of the fundamental group $\pi_1(R,f_0)$ in the group of automorphisms of the integer homology group $H_1(F^{-1}(f_0))$. 
Each element $\gamma \in \pi_1(R,f_0)$ acts via parallel transport of integer homology cycles $\alpha_i$;
see \cite{Duistermaat1980}. 

We note that the appearance of the homology groups is due to the fact that the 
action coordinates \eqref{eq/actions} depend only on the homology class of $\alpha_i$ on the Arnol'd-Liouville torus. We observe that since the fibers of $F$ are tori, 
the group $H_1(F^{-1}(f_0))$ is isomorphic to $\mathbb Z^2$. It follows that the monodromy along a given path $\gamma$ is characterized by an integer matrix
$
M_\gamma \in \textup{GL}(2,\mathbb Z),
$
called the \textit{monodromy matrix} along $\gamma$. It can be shown that the determinant of this matrix equals $1$.

\begin{remark} \textit{(Examples and generalizations)}
Non-trivial monodromy has been observed in various examples of integrable systems, including the most fundamental 
ones, such as
the spherical pendulum \cite{Duistermaat1980, Cushman2015}, the hydrogen atom in crossed fields \cite{Cushman2000} and the spatial Kepler problem  \cite{DullinWaalkens2018, Martynchuk2019}. This invariant has also been generalized
in several different directions, leading to the notions of \textit{quantum} \cite{Cushman1988, Vu-Ngoc1999}, \textit{fractional}  
\cite{Nekhoroshev2006, Efstathiou2013, Martynchuk2017} and \textit{scattering} \cite{Bates2007, DullinWaalkens2008, Efstathiou2017, Martynchuk2019} \textit{monodromy}.
\end{remark}

\begin{remark} \textit{(Topological definition of monodromy)} Topologically, one can define
Hamiltonian monodromy along a loop $\gamma$ as monodromy of the torus (in the non-compact case --- cylinder) bundle  over this loop. More precisely, consider a
$T^2$-torus bundle
\begin{equation*}
F \colon F^{-1}(\gamma) \to \gamma, \ \gamma = S^1.
\end{equation*}
It can be obtained from a trivial bundle $[0,2\pi] \times T^2$ by gluing the boundary tori via a homeomorphism $f$, called the monodromy of $F$. 
In the context of integrable systems (when $F$ is the energy-momentum \textit{map} and $\gamma$ is a loop in the set of the regular values) the matrix of the push-forward map 
$$f_{\star} \colon \textup{H}_1(T^2) \to \textup{H}_1(T^2)$$
coincides with the monodromy matrix along $\gamma$ in the above sense. 
It follows, in particular, that monodromy can be defined for any torus bundle.
\end{remark}

\section{Chern classes} \label{appendix/chern}
Let $M'$ be an $\mathbb S^1$-invariant submanifold of $M$ which does not contain the critical points of $H$. The circle action on $M'$ is then free and we have a
principal circle bundle
\begin{equation*}
 \rho \colon M' \to M'/ \mathbb S^1.
\end{equation*}
Let $X_J$ denote the vector field on $M'$ corresponding to the circle action (such that the flow of $X_J$ gives the circle action) and let $\sigma$ be a $1$-form on $M'$ such that the following two conditions hold

\begin{center}
(i) $\sigma(X_J) = i$ and  (ii) $R^{*}_g(\sigma) = \sigma.$
\end{center}

Here $i \in i\mathbb R$ --- the Lie algebra of $\mathbb S^1 = \{e^{i\varphi} \in \mathbb C \mid \varphi \in [0,2\pi]\}$ and $R_g$ is the (right) action of $\mathbb S^1$.

The Chern (or the Euler) class\footnote{this Chern class should not be confused with Duistermaat's Chern class, which is another obstruction to the existence of global action-angle coordinates; see
\cite{Duistermaat1980, Lukina2008}.}can then defined as 
$${\bf c}_1 = s^{\star}(i dw / 2\pi) \in H^2(M'/ \mathbb S^1,\mathbb R),$$ where $s$ is any local section of the circle bundle $\rho \colon M' \to M'/ \mathbb S^1$.
Here $H^2(M'/ \mathbb S^1,\mathbb R)$ stands for the second de Rham cohomology group of the quotient $M'/ \mathbb S^1$. 

We note that if the manifold
$M'$ is compact and $3$-dimensional, the Chern number of $M'$ (see Definition~\ref{def/Chern_number}) is equal to the integral
\begin{equation*}
 \int_{M'/ \mathbb S^1} {\bf c}_1
\end{equation*}
of the Chern class ${\bf c}_1$ over the base manifold $M'/ \mathbb S^1$.

A non-trivial example of a circle bundle with non-trivial Chern class is given by the (anti-)Hopf fibration. Recall that
the Hopf fibration of the $3$-sphere
$$S^3 = \{(z,w) \in \mathbb C^2 \mid 1 = |z|^2+|w|^2\}$$
is the principal circle bundle $S^3 \to S^2$ obtained by reducing the circle action 
$(z,w) \mapsto (e^{it}z,e^{it}w)$. 
  The circle action $(z,w) \mapsto (e^{-it}z,e^{it}w)$ defines the anti-Hopf fibration of $S^3$.

\begin{lemma}
 The Chern number of the Hopf fibration is equal to $-1$, while for the anti-Hopf fibration it is equal to $1$.
\end{lemma}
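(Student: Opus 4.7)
My plan is to evaluate the Chern number of the Hopf bundle directly from Definition~\ref{def/Chern_number} by exhibiting an explicit almost-global section, and then to handle the anti-Hopf case by an orientation-reversal argument rather than by repeating the calculation.

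First I identify $S^3/\mathbb{S}^1 \cong \mathbb{C}P^1$, with all orientations inherited from the complex structure on $\mathbb{C}^2$. On the affine chart $\{[z:w]:w\neq 0\}$ with coordinate $\zeta=z/w$, I define
\[
s([z:w]) = \frac{1}{\sqrt{|\zeta|^2+1}}(\zeta,1),
\]
which is the unique Hopf-orbit representative with $w\in\mathbb{R}_{>0}$. This section is defined away from the excluded fiber $N=[1:0]$. In the opposite chart $\eta=w/z$ I take a small positively oriented loop $\alpha(\phi)=[1:\epsilon e^{i\phi}]$, compute $s(\alpha)$ and retract onto the fiber $\rho^{-1}(N)=\{(z,0):|z|=1\}$. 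A short computation shows that the retracted map has the form $\phi\mapsto(e^{-i\phi},0)$, whose degree relative to the positively oriented orbit $t\mapsto(e^{it},0)$ is $-1$. By Definition~\ref{def/Chern_number} the Chern number of the Hopf bundle is $-1$.

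For the anti-Hopf fibration I would invoke the diffeomorphism $\Phi\colon S^3\to S^3$, $(z,w)\mapsto(\bar z,w)$. A direct check shows that $\Phi$ conjugates the Hopf action into the anti-Hopf action, sends Hopf-orbits to anti-Hopf-orbits while preserving their orientations, and reverses the orientation of $S^3$ (since it is complex conjugation on one $\mathbb{C}$-factor). Consequently $\Phi$ descends to an orientation-reversing diffeomorphism of the base $\mathbb{C}P^1$, and realizes the anti-Hopf bundle as the pullback of the Hopf bundle by an oriented $\mathbb{S}^1$-bundle isomorphism whose base map flips orientation. Because $\int_B {\bf c}_1$ changes sign under base-orientation reversal, the anti-Hopf Chern number equals $+1$.

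The main obstacle is keeping all the orientation conventions straight: the orientation of $S^3$ as the boundary of the unit ball in $\mathbb{C}^2$, the orientation of the fiber $\mathbb{S}^1$, the induced orientation on the quotient $\mathbb{C}P^1$, and the orientation of the loop $\alpha$ in the chart $\eta$. A single careless sign-flip, particularly in the retraction step comparing $s(\alpha)$ to the orbit parameterization, produces the opposite answer and is hard to detect after the fact. Once the conventions are pinned down, both computations are essentially routine.
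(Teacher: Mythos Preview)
Your argument is correct. For the Hopf fibration you follow Definition~\ref{def/Chern_number} literally: your almost-global section $s$ is precisely the paper's $s_1$, and your winding computation at $N=[1:0]$ is equivalent to the paper's computation of the transition cocycle $t_{12}=\exp(-i\,\mathrm{Arg}\,u)$ between $s_1$ and the second section $s_2$, which extends over $N$. So this part is essentially identical.

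The only genuine difference is your treatment of the anti-Hopf case. The paper simply repeats the computation (``analogous''), whereas you invoke the orientation-reversing conjugacy $\Phi(z,w)=(\bar z,w)$ and deduce the sign flip from the behaviour of $\int_B\mathbf{c}_1$ under base-orientation reversal. Your route is cleaner and explains \emph{why} the two numbers are opposite, at the cost of the bookkeeping you flag: one must check that $\Phi$ preserves the fibre orientation while reversing that of $S^3$, so that the induced base map is orientation-reversing. You have verified these points correctly, so the concern you raise about sign conventions does not bite here.
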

\begin{proof}
 Consider
  the case of the Hopf fibration  (the anti-Hopf case is analogous). Its projection map $h \colon S^3 \to S^2$ is defined by 
  $h(z,w) = (z:w) \in \mathbb {CP}^1 = S^2.$
  Put 
  $$U_1 = \{(u:1) \mid u \in \mathbb C, \ |u| < 1\} \mbox{ and }  U_2 = \{(1:v) \mid v \in \mathbb C, \ |v| < 1\}.$$ 
  Define the section $s_j \colon U_j \to S^3$ by the formulas
 $$s_1((u:1)) = \left(\dfrac{u}{\sqrt{|u|^2+1}},\dfrac{1}{\sqrt{|u|^2+1}}\right)$$
 and
 $$s_2((1:v)) = \left(\dfrac{1}{\sqrt{|v|^2+1}},\dfrac{v}{\sqrt{|v|^2+1}}\right). $$
 Now, the gluing cocycle $t_{12} \colon S^1 = \overline{U}_1 \cap \overline{U}_2 \to \mathbb S^1$ corresponding to the sections $s_1$ and $s_2$ is given by
\begin{equation*}
t_{12}((u:1)) = \exp{ (- i \textup{Arg} \, u)}.
\end{equation*}
If follows that the winding number equals $-1$ (the loop $\alpha$ in  Definition~\ref{def/Chern_number}
is given by the equator $S^1 = \overline{U}_1 \cap \overline{U}_2$ in this case).
\end{proof}

\bibliographystyle{amsplain}
\bibliography{library}

\end{document}